\documentclass[11pt,a4paper]{article}

\usepackage[utf8]{inputenc}
\usepackage[T1]{fontenc}
\usepackage{amsmath,amssymb,amsthm}
\usepackage[margin=2.5cm]{geometry}
\usepackage{enumitem}
\usepackage{graphicx}
\usepackage[colorlinks=true,linkcolor=blue,citecolor=blue,urlcolor=blue]{hyperref}
\usepackage{booktabs}

\theoremstyle{plain}
\newtheorem{theorem}{Theorem}[section]
\newtheorem{proposition}[theorem]{Proposition}
\newtheorem{lemma}[theorem]{Lemma}
\newtheorem{corollary}[theorem]{Corollary}

\theoremstyle{definition}
\newtheorem{definition}[theorem]{Definition}
\newtheorem{principle}[theorem]{Modelling Principle}

\theoremstyle{remark}
\newtheorem{remark}[theorem]{Remark}

\newcommand{\R}{\mathbb{R}}
\newcommand{\Pos}{\mathrm{Pos}}

\begin{document}

\begin{center}
{\Large\bfseries Special Markowitz:\hspace{-.2em} Thermodynamic Formalism for the\\[6pt]
Joint Regularisation of Returns and Covariance}

\bigskip\bigskip

David Reinhardt\\[6pt]
{\itshape studio\hspace{-.15em} entropica, Sch\"aftlarn/Isartal, Germany\\[2pt]
dr.david.reinhardt@entropica.studio}
\end{center}

\bigskip

\begin{abstract}
Special Markowitz (SM) regularises returns and covariance jointly,
relative to a reference state $(\mu_{\mathrm{ref}},\Sigma_{\mathrm{ref}})$.
Each eigendirection of the whitened relative operator carries a
signed spectral potential
$\Phi_k\in\R$, with persistence factor $\psi_k=e^{-\Phi_k}>0$.
Positive potentials attenuate empirical deviations from the
reference geometry, zero potential preserves them, and negative
potentials amplify them.
The persistence factor~$\psi_k$ governs both the return signal
and the covariance deviation: the regularised deviation from
the reference is~$\psi_k$ times the empirical deviation.
The logarithmic potential coordinate is characterised by a
multiplicative composition law on the group $((0,\infty),\times)
\cong(\R,+)$;
the Stein loss is characterised
as the unique free-energy density (within a natural class)
compatible with the resulting coupling.
The SM pressure functional is additive across modes --- the
defining property of Special Markowitz.
\end{abstract}

\section{Working Definition}

Special Markowitz applies a signed spectral potential field to
jointly regularise return and covariance
relative to a reference state $(\mu_{\mathrm{ref}},\Sigma_{\mathrm{ref}})$.

\smallskip\noindent
This determines:
\begin{itemize}[nosep,leftmargin=2em]
\item the \textbf{space}: $\Pos(n)\times\R^n$ --- covariance
  operators and returns;
\item the \textbf{reference state}:
  $(\mu_{\mathrm{ref}},\Sigma_{\mathrm{ref}})$ --- the equilibrium,
  the prior;
\item the \textbf{empirical state}: the data $(\hat\mu,\hat\Sigma)$ ---
  deforming the equilibrium;
\item the \textbf{potential field}:
  $(\Phi_1,\ldots,\Phi_n)\in\R^n$ --- controlling the per-mode
  coupling to the reference;
\item the \textbf{variational principle}: minimise the modal free
  energy, with potential-dependent reference coupling;
\item the \textbf{result}: an equilibrium Gaussian state --- the
  regularised distribution $(\mu^*,\Sigma^*)$.
\end{itemize}

\smallskip\noindent
We use the terminology of thermodynamic formalism in the
mathematical sense of Ruelle: potentials, exponential weights,
variational functionals, equilibrium states, and pressure.
Whether the SM pressure admits an exact representation as a Ruelle
pressure, with an associated invariant measure and entropy
functional, is not established here.

\section{Thermodynamic Motivation}
\label{sec:motivation}

This section traces the conceptual path that leads
to the SM construction, starting from a classical scalar regularisation problem and
arriving, by three natural steps, at a spectral free-energy
formulation.

\subsection{Scalar regularisation}

Consider the standard Gaussian shrinkage problem: given an
empirical state~$\hat{\mathcal{N}} = \mathcal{N}(\hat\mu,\hat\Sigma)$
and a reference state~$\mathcal{N}_{\mathrm{ref}} =
\mathcal{N}(\mu_{\mathrm{ref}},\Sigma_{\mathrm{ref}})$, find a
regularised state~$\mathcal{N}^* = \mathcal{N}(\mu^*,\Sigma^*)$
that balances fidelity to the data against proximity to the reference:
\begin{equation}\label{eq:mot-scalar-KL}
(\mu^*,\Sigma^*)
= \operatorname*{arg\,min}_{(\mu,\Sigma)}
\Bigl\{
  D_{\mathrm{KL}}\!\bigl(\mathcal{N}(\mu,\Sigma)
    \,\big\|\,\hat{\mathcal{N}}\bigr)
  \;+\;
  \eta\,
  D_{\mathrm{KL}}\!\bigl(\mathcal{N}(\mu,\Sigma)
    \,\big\|\,\mathcal{N}_{\mathrm{ref}}\bigr)
\Bigr\},
\end{equation}
where $\eta \ge 0$ is a single scalar coupling parameter.
When both KL terms share the same informational geometry, the
first-order condition yields a barycentric solution of the
schematic form
\begin{equation}\label{eq:mot-bary}
\mathcal{N}^*
= \frac{1}{1+\eta}\,\hat{\mathcal{N}}
  \;+\;\frac{\eta}{1+\eta}\,\mathcal{N}_{\mathrm{ref}}\,,
\end{equation}
where the ``$+$'' denotes the appropriate mixture in the chosen
geometry (not a na\"ive linear combination of measures).
For Gaussian states this barycentric notation refers to the
corresponding information-geometric interpolation in natural
parameters; it does not imply componentwise linear interpolation
of $\mu$ and~$\Sigma$.
The fraction $1/(1+\eta)$ controls how much of the empirical signal
persists; the complement $\eta/(1+\eta)$ is the shrinkage toward
the reference.

\subsection{From coupling to potential}

Define
\begin{equation}\label{eq:mot-Phi}
\boxed{\;\Phi := \ln(1+\eta)\;}\,,
\qquad
\text{equivalently}\quad
\eta = e^{\Phi}-1\,.
\end{equation}
Then the persistence and shrinkage fractions become exponential
weights,
\begin{equation}\label{eq:mot-persistence}
\underbrace{\frac{1}{1+\eta}}_{\text{persistence}}
= e^{-\Phi},
\qquad\qquad
\underbrace{\frac{\eta}{1+\eta}}_{\text{shrinkage}}
= 1 - e^{-\Phi},
\end{equation}
and the barycentric solution~\eqref{eq:mot-bary} takes the form
\begin{equation}\label{eq:mot-bary-Phi}
\boxed{\;
\mathcal{N}^*
= e^{-\Phi}\,\hat{\mathcal{N}}
  \;+\;(1 - e^{-\Phi})\,\mathcal{N}_{\mathrm{ref}}
\;}\,.
\end{equation}

\medskip\noindent
\textbf{Why the logarithm?}\enspace
The reparametrisation $\eta \mapsto \Phi$ is not merely cosmetic.
It is singled out by the composition structure of independent
persistence filters.

If two independent filters with persistence $\psi_A$ and~$\psi_B$
are applied sequentially, their combined persistence is
multiplicative:
$\psi_{A \circ B} = \psi_A\,\psi_B$.
Requiring an \emph{additive} coordinate for this composition
forces, up to a positive scale factor, the logarithmic
parametrisation
\begin{equation}\label{eq:mot-log-forced}
\Phi = -\ln\psi,
\qquad
\Phi_{A \circ B} = \Phi_A + \Phi_B\,.
\end{equation}
This additive structure is precisely the one naturally
accommodated by thermodynamic formalism, where potentials enter
\emph{linearly} in the variational principle.  The classical
pressure functional (Ruelle~[5]) has the form
\begin{equation}\label{eq:mot-pressure}
P(\varphi)
= \sup_{\nu}\;\bigl[\,\langle\varphi,\nu\rangle
  + H(\nu)\,\bigr],
\end{equation}
where $\varphi$~is a potential, $\nu$~ranges over admissible
states, and $H(\nu)$ is an entropy-like functional.  The
logarithmic parametrisation does not follow \emph{from} Ruelle's
variational principle; rather, multiplicative persistence produces
additive log-potentials, and Ruelle's formalism provides the
natural variational language for potentials with exactly this
additive structure.

\begin{remark}[Parameter chain]
\label{rem:mot-chain}
The full parameter chain is
\[
\eta
\;\longleftrightarrow\;
\alpha = \frac{\eta}{1+\eta}
\;\longleftrightarrow\;
\psi = 1 - \alpha = e^{-\Phi}
\;\longleftrightarrow\;
\Phi = -\ln\psi\,.
\]
Among these, $\eta$~is the Lagrange coupling, $\alpha$~the
shrinkage intensity, $\psi$~the persistence factor, and $\Phi$~the
thermodynamically natural coordinate.  The algebraic parameter
ranges are
\[
  \eta\in(-1,\infty),\qquad
  \alpha\in(-\infty,1),\qquad
  \psi\in(0,\infty),\qquad
  \Phi\in\R.
\]
Mode-dependent admissibility (Definition~\ref{def:admissible})
restricts these ranges further when~$\lambda_k<1$.
The distinguished point is
$\eta=0\;\Leftrightarrow\;\alpha=0
\;\Leftrightarrow\;\psi=1
\;\Leftrightarrow\;\Phi=0$:
the identity transformation that preserves the empirical state.
\end{remark}

\subsection{The spectral potential field}

Equation~\eqref{eq:mot-bary-Phi} reveals the limitation of the scalar
formulation: a single~$\Phi$ forces \emph{all spectral directions}
to share the same persistence~$e^{-\Phi}$.  In practice, different
eigenmodes of the covariance carry very different amounts of
statistical evidence.  The natural generalisation is therefore
\begin{equation}\label{eq:mot-Phi-field}
\Phi \;\longrightarrow\; (\Phi_1,\ldots,\Phi_n)\,,
\qquad
e^{-\Phi} \;\longrightarrow\; e^{-\Phi_k}\,,
\end{equation}
a \emph{spectral potential field} that assigns each eigenmode its
own spectral potential.

\medskip\noindent
\textbf{Which eigenmodes?}\enspace
The coordinates~$k$ are not the original asset directions.  They
are determined by the \emph{relative geometry} of empirical and
reference covariance.

The guiding idea comes from Bianconi's \emph{Gravity from
Entropy}~[7], where two metrics $G$ (matter-induced)
and~$g$ (reference/geometric) are compared via the action
\begin{equation}\label{eq:mot-GfE}
\mathcal{L}
= -\operatorname{Tr}\ln\!\bigl(G\,g^{-1}\bigr)
= -\sum_{k=1}^n \ln\lambda_k\,,
\end{equation}
where $\lambda_1,\ldots,\lambda_n$ are the eigenvalues of the
relative operator~$G\,g^{-1}$.  This action is manifestly
basis-independent; the eigenbasis of $G\,g^{-1}$ is the natural
coordinate system in which the two metrics are compared.

Transferring this to the portfolio setting, let $G \to \hat\Sigma$
(empirical covariance) and $g \to \Sigma_{\mathrm{ref}}$
(reference covariance).  The generally non-symmetric relative
operator~$\hat\Sigma\,\Sigma_{\mathrm{ref}}^{-1}$ has the same
spectrum as its symmetric representative
$A := \Sigma_{\mathrm{ref}}^{-1/2}\,
   \hat\Sigma\,
   \Sigma_{\mathrm{ref}}^{-1/2}$
(defined formally in Section~\ref{sec:setup}).
Bianconi's action evaluated on this pair gives
$\mathcal{L}_{\mathrm{GfE}}
= -\operatorname{Tr}\ln A
= -\sum_{k} \ln\lambda_k$.

The Stein divergence of the same pair decomposes as
\begin{equation}\label{eq:mot-Stein-decomp}
D_{\mathrm{Stein}}\!\bigl(\hat\Sigma \,\big\|\, \Sigma_{\mathrm{ref}}\bigr)
= \sum_{k} (\lambda_k - \ln\lambda_k - 1)
= \underbrace{\textstyle\sum_k \lambda_k}_{\operatorname{Tr} A}
\;+\;\underbrace{\bigl(-\textstyle\sum_k \ln\lambda_k\bigr)}_
    {\mathcal{L}_{\mathrm{GfE}}}
\;-\; n\,.
\end{equation}
Under the portfolio identification $G\mapsto\hat\Sigma$,
$g\mapsto\Sigma_{\mathrm{ref}}$, the logarithmic part of
Bianconi's relative-metric action coincides algebraically with
the log-determinant term of the Stein divergence.  Adding the
trace term~$\operatorname{Tr} A = \sum_k \lambda_k$ yields the
familiar energy-minus-log-volume structure of the Stein loss.
This suggests a thermodynamic reading:
$D_{\mathrm{Stein}}
= \underbrace{\operatorname{Tr} A}_{\text{``energy''}}
  +\underbrace{\mathcal{L}_{\mathrm{GfE}}}_{\text{``neg.\ entropy''}}
  - n$,
i.e.\ the Stein loss as a free-energy functional whose entropic
component is Bianconi's geometric action.

\begin{remark}[Two routes to the Stein loss]
\label{rem:mot-two-routes}
The Stein loss appears here from two independent directions:
(i)~the logarithmic component of Bianconi's geometric
action suggests a free-energy reading whose completion is
$D_{\mathrm{Stein}}$, and (ii)~later, the SM characterisation
theorem (Proposition~\ref{prop:characterisation}) identifies
$D_{\mathrm{Stein}}$ as the \emph{unique} covariance divergence
compatible with the Coupling Identity within the assumed loss
class.  That both routes single out the same divergence is a
non-trivial consistency.
\end{remark}

\subsection{The scalar KL does not extend to a spectral potential}

A natural first attempt would be to replace $\eta$ in the
scalar problem~\eqref{eq:mot-scalar-KL} by mode-dependent
couplings~$\eta_k$.  But the Gaussian KL divergence is a
\emph{global} functional of the full distribution; it does not
decompose into independent modal contributions with freely
adjustable weights.  Writing
``$\sum_k \eta_k\,D_{\mathrm{KL}}^{(k)}$'' is not canonically
induced by the original global Gaussian KL; defining such
mode-dependent terms requires additional spectral structure.

This is precisely the point at which the architecture of
thermodynamic formalism becomes relevant.  In Ruelle's variational
principle~\eqref{eq:mot-pressure}, a \emph{non-constant} potential is
part of the basic setup; no assumption of spatial homogeneity is
needed.  The transition
\[
\underset{\text{scalar KL coupling}}{\eta}
\;\longrightarrow\;
\underset{\text{potential coordinate}}{\Phi = \ln(1+\eta)}
\;\longrightarrow\;
\underset{\text{spectral potential field}}{(\Phi_1,\ldots,\Phi_n)}
\]
therefore calls for a new, modally defined variational principle
rather than a patched version of the global Gaussian~KL.

\subsection{Spectral free energy}

With the SM sign convention, the Gibbs weight is~$e^{-\Phi}$,
so the corresponding Ruelle potential is $\varphi = -\Phi$.  The pressure
formulation~\eqref{eq:mot-pressure} therefore has the dual form
\begin{equation}\label{eq:mot-dual}
-P_R(-\Phi)
= \inf_{\nu}\;F_\Phi(\nu),
\qquad
F_\Phi(\nu) := \langle\Phi,\nu\rangle - H(\nu).
\end{equation}
We write $P_R$ for the Ruelle pressure to distinguish it from the
SM pressure $P_{\mathrm{SM}}(\Phi):=-F^*(\Phi)$ defined later
(Definition~\ref{def:pressure}).  The equilibrium state minimises
the free energy; for Special Markowitz, this free-energy side is
the natural formulation, because SM is from the outset a
minimisation problem.

\emph{SM does not identify its spectral free energy with the
Ruelle pressure of an underlying financial dynamical system.
Rather, it adopts the same variational architecture: an additive
potential field coupled to state variables, with equilibrium
defined by minimisation of a free-energy functional.  Whether
$P_{\mathrm{SM}} = P_R$ for some underlying dynamical system
is not established here.}

In the spectral setting, the state is described by the modal
regularisation variables $(\sigma_k, m_k)_{k=1}^n$, the potential
field is~$(\Phi_k)_{k=1}^n$, and the free energy takes the form
\begin{equation}\label{eq:mot-SM-F}
\boxed{\;
F\bigl(\sigma, m;\,\Phi\bigr)
= \sum_{k=1}^n F_k\!\bigl(\sigma_k, m_k;\,\Phi_k\bigr)
\;}\,.
\end{equation}
\textbf{The separability
$F = \sum_k F_k$ is the defining structural assumption of Special
Markowitz.}  It means that modes are coupled to the potential field
but not to each other.  Under the modal product structure of the
admissible state space, this produces an additive pressure
$P = \sum_k P_k(\Phi_k)$ with vanishing cross-susceptibilities
(Definition~\ref{def:SM}).
General Markowitz, by contrast, would allow non-vanishing cross
terms---the analogue of interactions and curvature, and, in an
appropriate large-system limit, potentially phase-transition-like
collective behaviour.

\subsection{From Common Persistence to the Coupling Identity}

The Coupling Identity
(Theorem~\ref{thm:coupling-identity}) does not follow from the
scalar formulation alone.  In SM, common persistence of covariance
deformation is imposed as a modelling principle
(Principle~\ref{pr:common-persistence}), while the return
scaling follows from the quadratic loss structure.
Proposition~\ref{prop:characterisation} then shows that, within the
stated loss class, Stein's loss is the unique covariance divergence
capable of realising the prescribed persistence under
scale-invariant coupling.  Theorem~\ref{thm:coupling-identity}
establishes that the resulting variational equilibrium scales
both return and covariance deformation by the same persistence
factor~$\psi_k=e^{-\Phi_k}$.

\subsection{Summary}

Three independent ideas converge in the SM construction:

\begin{center}
\renewcommand{\arraystretch}{1.5}
\begin{tabular}{@{}p{3.5cm}p{5.5cm}p{4.2cm}@{}}
\toprule
\textbf{Source} & \textbf{Contribution} & \textbf{SM role} \\
\midrule
KL regularisation
  & scalar coupling $\eta$
  & starting point \\
Bianconi,\newline \emph{Gravity from Entropy}
  & $\mathcal{L} = -\mathrm{Tr}\ln(G\,g^{-1})$:\newline
    eigenmodes of\newline
    $A = \Sigma_{\mathrm{ref}}^{-1/2}
    \hat\Sigma\,\Sigma_{\mathrm{ref}}^{-1/2}$
  & natural coordinates~$k$;\newline
    $\mathcal{L}_{\mathrm{GfE}}$ = log.\ component\newline
    of Stein loss \\
Thermodynamic formalism\newline (Ruelle)
  & non-constant potential\newline
    in a variational principle
  & spectral potential~$\Phi_k$,\newline
    free energy $F = \sum F_k$ \\
\bottomrule
\end{tabular}
\end{center}

\noindent
The title \emph{Thermodynamic Formalism} refers to this
architecture: not the claim that financial markets are
thermodynamic systems, but the transfer of a mathematical
organisational principle---\emph{potential\/ $+$ entropy\/ $+$
variational principle\/ $\to$ equilibrium}---that has proved
powerful in other domains (Bowen~[6],
Ruelle~[5]).

\begin{quote}
\emph{We transfer the architecture, not the physics.}
\end{quote}

\noindent
The freedom in the choice of~$\Phi$ is a feature, not an
indeterminacy.  Different calibration models or spectral
regularisation principles---Marchenko--Pastur thresholding,
bootstrap stability, Bayesian uncertainty, or the agnostic
symmetry underlying Agnostic Risk
Parity~(Section~\ref{sec:arp})---induce different spectral
potentials; the SM formalism provides a common variational
framework in which each such choice determines the joint
regularisation of returns and covariance through the Coupling
Identity~(Theorem~\ref{thm:coupling-identity}).

\section{Setup}
\label{sec:setup}

\begin{definition}[Whitened relative operator]
\label{def:whitened}
The whitened relative operator is the symmetric positive-definite
matrix
\begin{equation}\label{eq:A}
  A := \Sigma_{\mathrm{ref}}^{-1/2}\,
       \hat\Sigma\,
       \Sigma_{\mathrm{ref}}^{-1/2}
  \;\in\; \Pos(n).
\end{equation}
Since $A$ is symmetric, it admits an orthogonal eigendecomposition
$A=U\Lambda U^\top$ with
$\Lambda=\mathrm{diag}(\lambda_1,\ldots,\lambda_n)$, $\lambda_k>0$.
The eigenvectors $\{u_k\}$ define the natural basis of the problem.
\end{definition}

\begin{remark}[Why whitening is necessary]
The asymmetric product $R=\Sigma_{\mathrm{ref}}^{-1}\hat\Sigma$
has the same eigenvalues as~$A$, but is in general not symmetric
and does not admit an orthogonal eigendecomposition. All subsequent
spectral arguments require the symmetric form~$A$.
\end{remark}

\begin{definition}[Whitened coordinates]
\label{def:whitened-coords}
Define the whitening map $z=\Sigma_{\mathrm{ref}}^{-1/2}x$.
In whitened coordinates:
\[
  \hat\Sigma_w = A = U\Lambda U^\top, \quad
  \Sigma_{\mathrm{ref},w} = I, \quad
  \hat\mu_w = \Sigma_{\mathrm{ref}}^{-1/2}\hat\mu, \quad
  \mu_{\mathrm{ref},w} = \Sigma_{\mathrm{ref}}^{-1/2}\mu_{\mathrm{ref}}.
\]
In the eigenbasis $\{u_k\}$ of~$A$, the whitened data covariance
becomes $\Lambda=\mathrm{diag}(\lambda_1,\ldots,\lambda_n)$ and the
reference covariance remains~$I$. The whitened return difference
$\delta\mu_w := \hat\mu_w - \mu_{\mathrm{ref},w}$ has components
$\delta_k = u_k^\top\delta\mu_w$.
\end{definition}

\begin{remark}[Spectral submanifold]
SM regularises spectrally: it preserves the eigendirections of~$A$
and deforms only eigenvalues and return components. The optimisation
lives on
$M_A=\{U\,\mathrm{diag}(\sigma_1,\ldots,\sigma_n)\,U^\top:\sigma_k>0\}$.
\end{remark}

\section{Signed Spectral Potential}

\begin{definition}[Signed spectral potential and persistence factor]
\label{def:signed-potential}
Each relative eigendirection~$k$ is assigned a
\textbf{signed spectral potential}
$\Phi_k\in\R$, with \textbf{persistence factor}
\begin{equation}\label{eq:Phi-def}
  \psi_k \;:=\; e^{-\Phi_k} \;>\; 0.
\end{equation}
Equivalently, $\Phi_k = -\ln\psi_k$.
The associated \textbf{reference-coupling strength} is
\begin{equation}\label{eq:coupling}
  s_k \;:=\; \psi_k^{-1}-1 \;=\; e^{\Phi_k}-1.
\end{equation}
Three regimes arise:
\[
\renewcommand{\arraystretch}{1.3}
\begin{array}{c|c|c|c}
  \Phi_k & \psi_k & s_k & \text{effect on }\lambda_k-1 \\\hline
  >0 & <1 & >0 & \text{attenuation} \\
  =0 & =1 & =0 & \text{preservation (identity)} \\
  <0 & >1 & <0 & \text{amplification}
\end{array}
\]
$\Phi_k=0$ is not a boundary; it is the neutral element.
\end{definition}

\begin{definition}[Admissible potential field]
\label{def:admissible}
A potential field $(\Phi_1,\ldots,\Phi_n)\in\R^n$ is
\textbf{admissible} for the data eigenvalues
$(\lambda_1,\ldots,\lambda_n)$ if
\begin{equation}\label{eq:admissible}
  \boxed{\;
    \lambda_k^* \;=\; 1+e^{-\Phi_k}(\lambda_k-1) \;>\; 0
  \;}
  \qquad\text{for every } k.
\end{equation}
\end{definition}

\begin{corollary}[Admissible domain]
\label{cor:admissible-domain}
The set of admissible potential fields is the open convex product
\begin{equation}\label{eq:admissible-domain}
  \mathcal{A}(\lambda)
  \;=\;
  \prod_{k:\,\lambda_k\ge 1}\R
  \;\times\;
  \prod_{k:\,\lambda_k<1}
  \bigl(\,\ln(1-\lambda_k),\;\infty\bigr).
\end{equation}
Since $\ln(1-\lambda_k)<0$ for $\lambda_k<1$, the origin
$\Phi=0$ lies in the interior:
$0\in\operatorname{int}\mathcal{A}(\lambda)$.
\end{corollary}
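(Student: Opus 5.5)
The admissibility condition in Definition~\ref{def:admissible} involves each $\Phi_k$ only through its own inequality, so I will first reduce the statement to a one-dimensional problem per mode. For fixed $k$, let
\[
\mathcal{A}_k:=\{\Phi_k\in\R:\ 1+e^{-\Phi_k}(\lambda_k-1)>0\}.
\]
The admissible set is then exactly $\mathcal{A}(\lambda)=\prod_{k=1}^n\mathcal{A}_k$. It remains to identify each $\mathcal{A}_k$, and then to read off openness, convexity and the interior property from the product structure.

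I split into cases according to the sign of $\lambda_k-1$, using that $e^{-\Phi_k}>0$ for every real $\Phi_k$.
\begin{itemize}[nosep,leftmargin=2em]
\item If $\lambda_k\ge 1$, then $e^{-\Phi_k}(\lambda_k-1)\ge 0$, so $\lambda_k^*\ge 1>0$ holds automatically and $\mathcal{A}_k=\R$.
\item If $0<\lambda_k<1$, set $c:=1-\lambda_k\in(0,1)$. The condition reads $1-c\,e^{-\Phi_k}>0$, i.e.\ $e^{-\Phi_k}<1/c$. Since $\exp$ and $\ln$ are strictly increasing, this is equivalent to $-\Phi_k<-\ln c$, i.e.\ $\Phi_k>\ln(1-\lambda_k)$. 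Hence $\mathcal{A}_k=(\ln(1-\lambda_k),\infty)$.
\end{itemize}
This yields the product formula \eqref{eq:admissible-domain}. The only point needing care is keeping the direction of the inequality correct when taking logarithms and negating, and checking that $\lambda_k>0$ (Definition~\ref{def:whitened}) makes $c<1$, so that $\ln c$ is finite and negative.

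Each factor is an open interval, possibly all of $\R$. A finite product of open intervals is open in the product topology, which coincides with the standard topology on $\R^n$. It is also convex, since a convex combination of two points of the product lies coordinatewise in each interval. Finally, $\lambda_k\in(0,1)$ gives $1-\lambda_k\in(0,1)$ and so $\ln(1-\lambda_k)<0$. Thus $0$ lies in every factor: trivially for $\lambda_k\ge 1$, and because $0>\ln(1-\lambda_k)$ otherwise. Since the set is open, $0\in\mathcal{A}(\lambda)=\operatorname{int}\mathcal{A}(\lambda)$. Equivalently, at $\Phi=0$ we have $\lambda_k^*=\lambda_k>0$ directly.

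No step is a genuine obstacle. The argument is a direct unpacking of the definition, and the sign handling in the case $\lambda_k<1$ is the only place an error could slip in.
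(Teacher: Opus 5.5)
Your proof is correct and follows essentially the same route as the paper: a per-mode case split on the sign of $\lambda_k-1$, with the paper rewriting the condition as $e^{\Phi_k}>1-\lambda_k$ where you use the equivalent form $e^{-\Phi_k}<1/(1-\lambda_k)$. Your explicit check of openness, convexity and $0\in\operatorname{int}\mathcal{A}(\lambda)$ from the product structure supplies details the paper leaves implicit.
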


\begin{proof}
The condition $\lambda_k^*>0$ is equivalent to
$e^{\Phi_k}>1-\lambda_k$.
For $\lambda_k\ge 1$ the right-hand side is non-positive, so the
inequality holds for all $\Phi_k\in\R$.  For $\lambda_k<1$ it
gives $\Phi_k>\ln(1-\lambda_k)$, and
$\ln(1-\lambda_k)<0$.
\end{proof}

\begin{remark}[Equivalence of admissibility conditions]
\label{rem:admissible-equiv}
The following conditions are equivalent:
\[
  \lambda_k^*>0
  \quad\Longleftrightarrow\quad
  F_k \text{ is strictly convex in the covariance mode}
  \quad\Longleftrightarrow\quad
  \lambda_k+s_k>0.
\]
The first is geometric (positive regularised eigenvalue), the second
variational (well-posed optimisation), the third algebraic.  They
express a single constraint in three languages.
\end{remark}

\begin{remark}[Primitive and coordinate]
\label{rem:primitive}
The persistence factor~$\psi_k$ is the primitive quantity;
$\Phi_k=-\ln\psi_k$ is its natural logarithmic coordinate.
The exponential form $e^{-\Phi_k}$ arises as a consequence of this
coordinate choice, not from a physical Boltzmann distribution.
\end{remark}

\begin{proposition}[Characterisation of the logarithmic coordinate]
\label{prop:log-char}
Assume that persistence factors from independently composed sources
compose multiplicatively,
\begin{equation}\label{eq:mult-composition}
  \psi_{AB}=\psi_A\,\psi_B
  \qquad\text{(composition law),}
\end{equation}
and require that the potential coordinate $\Phi=g(\psi)$ compose
additively, $\Phi_{AB}=\Phi_A+\Phi_B$.
Then for any continuous, strictly monotone
$g\colon(0,\infty)\to\R$ with $g(1)=0$, the functional
equation $g(\psi_A\psi_B)=g(\psi_A)+g(\psi_B)$ has the unique
solution
\begin{equation}\label{eq:log-characterised}
  \Phi(\psi)=-c\ln\psi, \qquad c>0.
\end{equation}
With the normalisation $c=1$: $\Phi=-\ln\psi$.
Equivalently, $\Phi$ is the unique continuous isomorphism
$\bigl((0,\infty),\times\bigr)
\to\bigl(\R,+\bigr)$
mapping $\psi=1$ to $\Phi=0$,
up to positive scale.
\end{proposition}

\begin{proof}
Set $\psi=e^{-t}$, $t\in\R$.
Then $h(t):=g(e^{-t})$ satisfies $h(t_1+t_2)=h(t_1)+h(t_2)$
with $h$ continuous and $h(0)=0$. By the Cauchy functional equation,
$h(t)=ct$ for some $c>0$ (strict monotonicity forces $c>0$).
Hence $g(\psi)=-c\ln\psi$.
\end{proof}

\begin{remark}[Three characterisations]
\label{rem:two-rigidities}
Conditional on its structural requirements, the logarithmic
coordinate is not a convention but is forced by the group structure
of the composition law.
Together with Proposition~\ref{prop:characterisation} below, the
theory rests on three successive rigidity results:
\begin{equation}\label{eq:two-rigidities}
  \underbrace{\text{multiplicative composition}}_
    {\text{forces } \Phi=-\ln\psi}
  \;\;\xrightarrow{\;\text{then}\;}\;\;
  \underbrace{\text{Princ.~\ref{pr:variance-invariance}--\ref{pr:common-persistence}
    + Prop.~\ref{prop:characterisation} separable score class}}_
    {\text{forces } D=c\,D_{\mathrm{Stein}}}
  \;\;\xrightarrow{\;\text{then}\;}\;\;
  \underbrace{\text{Coupling Identity}}_
    {\text{follows}}.
\end{equation}
Conditional on these structural requirements, neither the
logarithmic coordinate nor the divergence remains arbitrary:
each is uniquely characterised up to positive scale.
\end{remark}

\begin{remark}[Construction from random matrix theory]
\label{rem:RMT}
Under a spiked covariance model
$\Sigma = \Sigma_{\mathrm{ref}}^{1/2}
(I + \sum_i \theta_i v_i v_i^\top)
\Sigma_{\mathrm{ref}}^{1/2}$
with concentration ratio $\gamma=n/T_{\mathrm{obs}}$, the
whitened relative operator is
$A = I + \sum_i \theta_i v_i v_i^\top$
and the Marchenko--Pastur bulk occupies $[\lambda_-,\lambda_+]$ with
$\lambda_\pm=(1\pm\sqrt\gamma)^2$.

\smallskip\noindent
\emph{Bulk eigenvalues} ($\lambda_k\in[\lambda_-,\lambda_+]$):
the sample eigenvector is asymptotically uninformative; the
asymptotic overlap~$\rho_k^2$ vanishes.
Under the calibration $\psi_k:=\rho_k^2$, this gives
$\psi_k\to 0$ and $\Phi_k\to\infty$.

\smallskip\noindent
\emph{Supercritical spikes:}
A population spike of strength $\theta_k$ produces a sample
eigenvalue that separates from the bulk when $\theta_k$ exceeds
the Baik--\mbox{Ben\,Arous}--P\'ech\'e (BBP) critical
threshold~$\theta_{\mathrm{crit}}=\sqrt\gamma$.
Above this threshold the sample eigenvector has non-zero
asymptotic overlap~$\rho_k^2$ with the population eigenvector,
with $\rho_k^2\to 0$ as $\theta_k\downarrow\theta_{\mathrm{crit}}$
and $\rho_k^2\to 1$ as $\theta_k\to\infty$.
A natural asymptotic calibration is $\psi_k:=\rho_k^2$, giving
$\Phi_k=-\ln\rho_k^2$ (reliable for strong spikes).
The precise form of $\rho_k^2(\theta_k,\gamma)$ depends on the
spike model; see~[1] for the eigenvalue transition and~[9] for
the eigenvector overlap asymptotics.

\smallskip\noindent
More generally, any calibrated spectral or stability-based map
into $(0,\infty)$ determines $\psi_k$ and hence~$\Phi_k$ (resampling
stability, cross-validation, factor-model fit, practitioner
judgment), subject to the admissibility
condition~\eqref{eq:admissible}.
SM does not prescribe a particular calibration rule;
it requires only an admissible map $k\mapsto\psi_k\in(0,\infty)$.
In the standard regularisation setting ($\Phi_k\ge 0$,
$\psi_k\le 1$), each mode is attenuated or preserved; the full
signed range $\Phi_k\in\R$ extends the formalism to spectral
amplification.
\end{remark}

\clearpage
\subsection*{The spectral landscape}

\begin{figure}[!ht]
\centering
\includegraphics[width=0.85\textwidth]{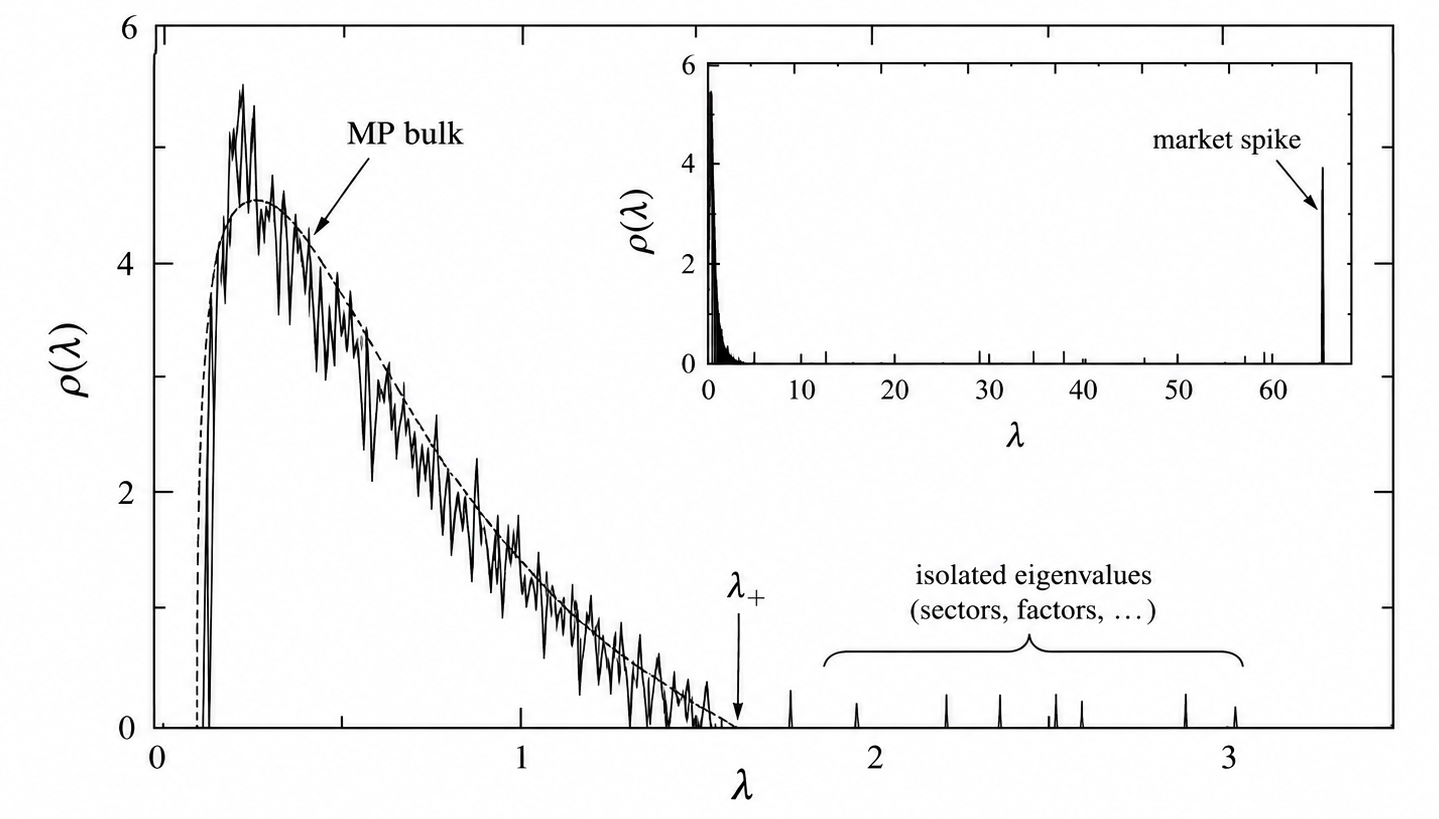}
\caption{Eigenvalue density of a sample correlation matrix in a
spiked covariance model (synthetic data). The dashed line shows the
Marchenko--Pastur density for the bulk. Most eigenvalues lie in the
bulk and correspond to statistically unreliable modes (high
spectral potential, $\Phi_k\to\infty$, $\psi_k\to 0$). A few
isolated eigenvalues lie above the bulk edge~$\lambda_+$,
corresponding to reliable modes (low~$\Phi_k$). The largest
eigenvalue (the ``market spike'') is well separated from the bulk
and shown only in the inset. Inset: same spectrum on a wider scale
including the market spike.
Inspired by an S\&P~500 spectral density plot in Bouchaud and
Potters~[10].}
\label{fig:spectral-landscape}
\end{figure}

\section{The Free Energy}

\subsection{Modelling Principle}

\begin{principle}[Scale-invariant coupling]
\label{pr:variance-invariance}
The regularisation coupling of eigendirection~$k$ depends on its
spectral potential~$\Phi_k$, not on the raw variance
scale~$\lambda_k$.
Formally, the effective coupling $\eta_k\lambda_k$ is determined by
the spectral potential alone:
\begin{equation}\label{eq:modelling-principle}
  \eta_k\lambda_k \;=\; s_k \;=\; e^{\Phi_k}-1
  \qquad\Longrightarrow\qquad
  \eta_k \;=\; \frac{e^{\Phi_k}-1}{\lambda_k}.
\end{equation}
\end{principle}

\begin{principle}[Common persistence]
\label{pr:common-persistence}
The persistence factor~$\psi_k=e^{-\Phi_k}$ governs the
covariance deformation in the same way as it governs the return
signal: the regularised deviation from the reference is
$\psi_k$~times the empirical deviation.  Hence a mode with
persistence~$\psi_k$ is required to satisfy
\begin{equation}\label{eq:common-persistence}
  \lambda_k^* - 1 \;=\; \psi_k\,(\lambda_k - 1).
\end{equation}
For $\psi_k<1$ ($\Phi_k>0$) this is contraction; for $\psi_k>1$
($\Phi_k<0$) it is amplification; for $\psi_k=1$ ($\Phi_k=0$)
it is preservation.
The covariance loss is not specified a priori.
Proposition~\ref{prop:characterisation} shows that, within the
stated loss class and under scale-invariant coupling,
this requirement uniquely characterises Stein's loss.
\end{principle}

\begin{remark}
Principles~\ref{pr:variance-invariance} and~\ref{pr:common-persistence}
separate two logically distinct roles:
$\lambda_k$~measures the empirical variance in direction~$k$;
$\Phi_k$~controls how much of that variance deviation persists
in the regularised state. Making these explicit axioms clarifies which part of
the theory is a modelling choice and which part is a mathematical
consequence.
\end{remark}

\subsection{Structural motivation and scope of the axioms}

The three structural assumptions --- multiplicative persistence
composition (Proposition~\ref{prop:log-char}), scale-invariant
coupling (Principle~\ref{pr:variance-invariance}), and common
persistence (Principle~\ref{pr:common-persistence}) ---
together with the loss-class assumptions of
Proposition~\ref{prop:characterisation}, determine the covariance
loss up to a positive multiplicative constant.
None of the three modelling axioms is derivable from first
principles.  We briefly motivate all three.

\smallskip\noindent
\emph{Multiplicative composition.}
The axiom $\psi_{AB}=\psi_A\psi_B$ models the persistence factor as
a multiplicative filter: when two independently specified
filters act successively, the combined persistence is
the product of the individual factors.
The logarithmic coordinate $\Phi=-\ln\psi$ is then the unique
continuous additive coordinate for this composition law
(Proposition~\ref{prop:log-char}), and the composition extends
to a group: every filter $\psi>0$ has an inverse
$\psi^{-1}=1/\psi$ (equivalently, $\Phi^{-1}=-\Phi$).
The group structure refers to the unconstrained parameter
space~$(0,\infty)$; for a fixed empirical spectrum, admissibility
restricts this group action to the state-dependent
domain~$\mathcal{A}(\lambda)$
(Corollary~\ref{cor:admissible-domain}), which need not be closed
under inversion.
This is a modelling convention for composing spectral
transformations; it does not assert that $\psi$ is a probability
or that empirical estimators must generally compose
multiplicatively.

\smallskip\noindent
\emph{Scale-independent deformation coupling.}
In whitened coordinates, $\lambda_k$ measures relative variance
with respect to the reference value~$1$, whereas $\Phi_k$ controls
the transformation applied to that modal deviation.
The modelling principle requires the spectral potential to determine
the \emph{factor} by which the deviation $\lambda_k-1$ is scaled,
rather than to depend additionally on the magnitude
of~$\lambda_k$.
Thus two modes assigned the same potential undergo the same
proportional transformation, even when their
empirical deviations differ in magnitude.
An eigenvalue-dependent coupling
$\eta_k\lambda_k=s(\Phi_k,\lambda_k)$ would conflate
the spectral potential with deviation size.

\smallskip\noindent
\emph{Common persistence.}
The persistence factor~$\psi_k$ is postulated to act on
covariance deformation in the same way as it acts (via the quadratic
return loss) on returns: the regularised deviation is
$\psi_k$~times the empirical deviation~$\lambda_k - 1$.  This is a
modelling choice about what $\psi_k$~\emph{means}, not a theorem.
The Characterisation Proposition below shows that this requirement
uniquely determines the covariance loss within the stated class.

\smallskip\noindent
Together, the three axioms express a
\emph{separation of concerns}: spectral potential from scale,
deformation coupling from raw magnitude, and a common interpretation of
persistence across returns and covariance.  The Characterisation
Proposition shows that these concerns, once formalised, leave no
remaining freedom in the choice of covariance loss within this
class.

\subsection{Characterisation of the covariance free-energy density}

\begin{proposition}[Characterisation of the covariance loss]
\label{prop:characterisation}
Let $D\in C^2\bigl((0,\infty)^2\bigr)$ satisfy
\begin{equation}\label{eq:separable-deriv}
  \partial_\sigma D(\sigma,\lambda)=a(\lambda)-b(\sigma),
  \qquad a,b\in C^1(0,\infty),
\end{equation}
together with
\begin{equation}\label{eq:D-normalisation}
  D(\lambda,\lambda)=0,\qquad
  D(\sigma,\lambda)>0 \;\text{ for }\; \sigma\neq\lambda.
\end{equation}
For $\lambda>0$ and $s>-1$ with $\lambda+s>0$
(the SM admissibility conditions: $1+s=e^{\Phi}>0$ and
$\lambda+s>0$, equivalently $s>-\min\{1,\lambda\}$),
define the combined loss
\begin{equation}\label{eq:J-def}
  J_{\lambda,s}(\sigma)
  :=
  D(\sigma,\lambda)+\frac{s}{\lambda}\,D(\sigma,1),
\end{equation}
and suppose that $J_{\lambda,s}$ has a unique minimiser
$\sigma^*_{\lambda,s}$ for every admissible $(\lambda,s)$.
Then the following are equivalent:
\begin{enumerate}[nosep,label=(\roman*)]
\item $\displaystyle\sigma^*_{\lambda,s}-1
  =\frac{\lambda-1}{1+s}$
  for all $\lambda>0$, $s>-\min\{1,\lambda\}$.
\item $D(\sigma,\lambda)
  =c\left[\sigma/\lambda-\ln(\sigma/\lambda)-1\right]$
  for some constant $c>0$ (Stein loss).
\end{enumerate}
\end{proposition}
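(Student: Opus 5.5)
The plan is to prove both directions by working only with the first-order condition for $J_{\lambda,s}$, after a preliminary reduction. Since $D(\cdot,\lambda)\ge 0$ vanishes at $\sigma=\lambda$, the point $\sigma=\lambda$ is an interior minimum of $\sigma\mapsto D(\sigma,\lambda)$. Hence $\partial_\sigma D(\lambda,\lambda)=0$, and \eqref{eq:separable-deriv} gives $a(\lambda)=b(\lambda)$. So throughout we may write $\partial_\sigma D(\sigma,\lambda)=b(\lambda)-b(\sigma)$. Moreover, $D$ is recovered from $b$ by integrating in $\sigma$ from $\lambda$, because $D(\lambda,\lambda)=0$.

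\emph{(ii)$\Rightarrow$(i).} This is a direct computation. For the Stein loss,
\[
J_{\lambda,s}'(\sigma)=c\Bigl[\frac{1+s}{\lambda}-\frac{1+s/\lambda}{\sigma}\Bigr],
\qquad
J_{\lambda,s}''(\sigma)=c\,\frac{1+s/\lambda}{\sigma^2}.
\]
The second derivative is positive exactly because $\lambda+s>0$, so $J_{\lambda,s}$ is strictly convex. Its unique critical point is $\sigma^*=(\lambda+s)/(1+s)$, which gives $\sigma^*-1=(\lambda-1)/(1+s)$. Since $1+s>0$, this critical point is positive, so it lies in the domain.

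\emph{(i)$\Rightarrow$(ii).} The unique minimiser is interior to $(0,\infty)$ and $J_{\lambda,s}$ is $C^1$, so it satisfies $J'_{\lambda,s}(\sigma^*)=0$. Put $\beta(x):=b(x)-b(1)$. The stationarity condition then reads
\[
(\lambda+s)\,\beta(\sigma^*)=\lambda\,\beta(\lambda).
\]
Next we eliminate $s$. Hypothesis (i) gives $\sigma^*(1+s)=\lambda+s$, so for $\sigma^*\neq1$ we have $s=(\lambda-\sigma^*)/(\sigma^*-1)$ and $\lambda+s=\sigma^*(\lambda-1)/(\sigma^*-1)$. Substituting, the condition becomes
\[
\frac{\sigma^*\beta(\sigma^*)}{\sigma^*-1}=\frac{\lambda\beta(\lambda)}{\lambda-1}.
\]
The key point is which pairs $(\sigma^*,\lambda)$ are reached by admissible $s$:
\begin{itemize}[nosep]
\item for $\lambda>1$ and $s\in(-1,\infty)$, $\sigma^*$ sweeps all of $(1,\infty)$;
\item for $\lambda<1$ and $s\in(-\lambda,\infty)$, $\sigma^*$ sweeps all of $(0,1)$.
\end{itemize}
Hence $x\beta(x)/(x-1)$ is constant on $(1,\infty)$, say equal to $k_+$, and constant on $(0,1)$, say equal to $k_-$.

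I expect the main obstacle to be gluing the two half-lines, since a priori $k_+\neq k_-$ is possible. This is where $b\in C^1$ is used. Because $\beta(1)=0$, letting $x\to1^\pm$ gives $k_\pm=\lim_{x\to1^\pm}\beta(x)/(x-1)=\beta'(1)$, so $k_+=k_-=:k$.

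It follows that $b(x)=b(1)+k-k/x$ for all $x\neq 1$, and by continuity also at $x=1$. Therefore $\partial_\sigma D(\sigma,\lambda)=k(1/\sigma-1/\lambda)$. Integrating from $\sigma=\lambda$ gives
\[
D(\sigma,\lambda)=-k\bigl[\sigma/\lambda-\ln(\sigma/\lambda)-1\bigr].
\]
Finally, positivity of $D$ off the diagonal forces $c:=-k>0$, which is the Stein loss (ii).
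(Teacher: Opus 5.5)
Your proof is correct. It shares the outer structure of the paper's argument: you get $a=b$ from the minimum of $D(\cdot,\lambda)$ at $\sigma=\lambda$, then integrate in $\sigma$ from the diagonal and use positivity to fix the sign of the constant. The central step is different.

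\textbf{The paper's route.} It substitutes $\sigma^*=(\lambda+s)/(1+s)$ into the first-order condition and differentiates the resulting functional equation in $s$ at $s=0$. This gives the singular linear ODE $a'(\lambda)(1-\lambda)=(a(1)-a(\lambda))/\lambda$. The paper solves it separately on $(0,1)$ and $(1,\infty)$, then glues the two solutions using continuity of $a'$ at $1$.

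\textbf{Your route.} You solve the coupling identity for $s$ and obtain the exact invariant $x\beta(x)/(x-1)$. Its constancy on each half-line follows directly from the range swept by $\sigma^*$, with no ODE theory. Your gluing needs only differentiability of $b$ at the single point $1$, via $\beta(1)=0$, rather than $a\in C^1$ on all of $(0,\infty)$.

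\textbf{What each buys.} The paper's argument uses (i) only for $s$ in a right-neighbourhood of $0$. It therefore shows that the coupling identity to first order in the coupling already forces Stein's loss. Yours is more elementary and goes straight to the closed form. With the full signed range, it even needs just one fixed $\lambda>1$ and one fixed $\lambda<1$.

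\textbf{Domain of $s$.} The paper deliberately restricts the forward direction to $s\ge 0$ (Remark~\ref{rem:domain-asymmetry}), whereas your sweep uses negative $s$. That is legitimate, since (i) is assumed on the whole admissible range. Your argument also survives the restriction. For $s\ge 0$, $\sigma^*$ sweeps $(1,\lambda]$ when $\lambda>1$ and $[\lambda,1)$ when $\lambda<1$. These nested intervals still force the invariant to be constant on each half-line.

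\textbf{The converse direction.} Your explicit strict-convexity check in (ii)$\Rightarrow$(i) is slightly more careful than the paper, which only solves the first-order condition. One small slip: positivity of $\sigma^*=(\lambda+s)/(1+s)$ uses $\lambda+s>0$ as well as $1+s>0$, not just the latter.
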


\begin{proof}
$(i)\Rightarrow(ii)$.
Since the hypothesis holds for every admissible
$s>-\min\{1,\lambda\}$, it
holds in particular for $s\ge 0$.  Restricting to this subset is
sufficient for the characterisation; the argument below uses only
$s\ge 0$.

The assumed shrinkage requires
$\sigma^*=(\lambda+s)/(1+s)$ for all $\lambda>0$, $s\ge 0$.
The FOC of~\eqref{eq:J-def}, using~\eqref{eq:separable-deriv},
reads
\[
  a(\lambda)+\frac{s}{\lambda}\,a(1)
  =\left(1+\frac{s}{\lambda}\right)b(\sigma^*).
\]
\emph{Step 1: $a=b$.}
At $s=0$, the combined loss reduces to $J_{\lambda,0}=D(\cdot,\lambda)$.
By~\eqref{eq:D-normalisation}, $D(\lambda,\lambda)=0$ and
$D(\sigma,\lambda)>0$ for $\sigma\neq\lambda$, so $\sigma^*=\lambda$.
The FOC at $s=0$ therefore gives $a(\lambda)-b(\lambda)=0$ for all
$\lambda>0$, i.e.\ $b=a$ on $(0,\infty)$.

\smallskip\noindent
\emph{Step 2: functional equation.}
With $b=a$, the FOC becomes
$a(\lambda)+(s/\lambda)\,a(1)=(1+s/\lambda)\,a(\sigma^*)$.
Substituting $\sigma^*=(\lambda+s)/(1+s)$ yields
\[
  a\!\left(\frac{\lambda+s}{1+s}\right)
  =\frac{\lambda\,a(\lambda)+s\,a(1)}{\lambda+s}
\]
for all $\lambda>0$, $s\ge 0$.
Differentiating in~$s$ at $s=0$ gives the ODE
$a'(\lambda)(1-\lambda)=(a(1)-a(\lambda))/\lambda$.
Continuity of $a$ and $a'$ at $\lambda=1$ identifies the constants
on the two intervals $(0,1)$ and $(1,\infty)$; the general
solution is $a(\lambda)=\alpha/\lambda+\beta$.

\smallskip\noindent
\emph{Step 3: integration.}
The FOC now reads
$\alpha/\lambda-\alpha/\sigma=0$ up to the additive constant.
Integrating $\partial_\sigma D=\alpha(1/\lambda-1/\sigma)$ in
$\sigma$ gives
$D(\sigma,\lambda)=\alpha[\sigma/\lambda-\ln\sigma]+C(\lambda)$.
The normalisation $D(\lambda,\lambda)=0$ determines
$C(\lambda)=-\alpha[1-\ln\lambda]$, yielding
$D=\alpha[\sigma/\lambda-\ln(\sigma/\lambda)-1]$.
Strict positivity $D(\sigma,\lambda)>0$ for $\sigma\neq\lambda$
requires $\alpha>0$; setting $c=\alpha$ completes this direction.

\smallskip\noindent
$(ii)\Rightarrow(i)$.
For Stein loss with $c=1$,
$\partial_\sigma J_{\lambda,s}
=(1+s)/\lambda-(1+s/\lambda)/\sigma$.
Setting this to zero gives
$\sigma^*=(1+s/\lambda)\cdot\lambda/(1+s)=(\lambda+s)/(1+s)$,
hence $\sigma^*-1=(\lambda-1)/(1+s)$.
For admissible~$s>-\min\{1,\lambda\}$, both $1+s>0$ and
$\lambda+s>0$, so $\sigma^*>0$.
\end{proof}

\begin{remark}[Domain asymmetry]
\label{rem:domain-asymmetry}
The two directions of the proof use different domains.
$(i)\Rightarrow(ii)$: the non-negative subset $s\ge 0$ already
determines~$D$ uniquely; the full admissible range
$s>-\min\{1,\lambda\}$ provides no additional information.
$(ii)\Rightarrow(i)$: the Stein-loss identity
$\sigma^*-1=(\lambda-1)/(1+s)$ holds for every
admissible~$s>-\min\{1,\lambda\}$, including the amplifying
regime $s\in(-\min\{1,\lambda\},\,0)$.
Thus the characterisation is proved on $s\ge 0$, but the
\emph{consequence}---the Coupling Identity---extends to the
full signed domain.
\end{remark}

\begin{remark}[Interpretation]
\label{rem:stein-interpretation}
Within this loss class, the Stein loss is not a remaining design
choice: it is the
unique covariance loss whose minimiser under the combined
objective~\eqref{eq:J-def} produces the covariance coupling
identity.
The linear ratio $\sigma/\lambda$ measures relative variance
displacement from the reference eigenvalue, while the logarithmic
term $-\ln(\sigma/\lambda)$ measures log-volume change and provides
the entropic component of the divergence, acting as a barrier
against eigenvalue collapse.
Henceforth we fix the covariance-loss normalisation by setting
$c=1$. This does not affect the equilibrium covariance.
\end{remark}

\subsection{Return shrinkage}

\begin{lemma}[Return coupling]
\label{lem:return-coupling}
For $\lambda>0$, $s>-1$, and $\delta\in\R$, the quadratic return
loss
\begin{equation}\label{eq:return-loss}
  Q_{\lambda,s,\delta}(m)
  :=
  \frac{(m-\delta)^2}{2\lambda}
  +\frac{s}{2\lambda}\,m^2
\end{equation}
has the unique minimiser
\begin{equation}\label{eq:return-shrinkage}
  m^* = \frac{\delta}{1+s} = e^{-\Phi_k}\,\delta.
\end{equation}
\end{lemma}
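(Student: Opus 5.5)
The plan is to treat $Q_{\lambda,s,\delta}$ as a one-dimensional quadratic in $m$ and read off its minimiser directly. I will first collect terms: expanding gives
\[
  Q_{\lambda,s,\delta}(m)=\frac{1}{2\lambda}\Bigl[(1+s)\,m^2-2\delta m+\delta^2\Bigr],
\]
so the leading coefficient is $(1+s)/(2\lambda)$. The hypotheses $\lambda>0$ and $s>-1$ make this coefficient strictly positive. This is the one point where the sign conditions genuinely enter, and it is the only step requiring any care. Since the admissibility condition $1+s=e^{\Phi_k}>0$ from Definition~\ref{def:signed-potential} holds automatically for every real $\Phi_k$, no further restriction such as $\lambda+s>0$ is needed here. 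This contrasts with the covariance mode in Proposition~\ref{prop:characterisation}.

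Strict positivity of the leading coefficient gives $Q''=(1+s)/\lambda>0$, so $Q$ is a strictly convex, coercive quadratic on $\R$. Hence it has exactly one critical point, and that point is the unique global minimiser. I will then write down the first-order condition
\[
  Q'(m)=\frac{m-\delta}{\lambda}+\frac{s}{\lambda}\,m=0
  \quad\Longleftrightarrow\quad (1+s)\,m=\delta,
\]
which yields $m^*=\delta/(1+s)$. Alternatively, completing the square makes uniqueness immediate without appealing to convexity.

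Finally, I will substitute the coupling strength $s_k=e^{\Phi_k}-1$ from \eqref{eq:coupling}, so that $1+s=e^{\Phi_k}$ and therefore $m^*=e^{-\Phi_k}\delta=\psi_k\delta$. It is worth noting that $\lambda$ cancels from the minimiser, because both terms of \eqref{eq:return-loss} carry the same factor $1/\lambda$. This is exactly the scale-invariance of Principle~\ref{pr:variance-invariance} appearing on the return side, and it is why the return persistence equals the covariance persistence of Proposition~\ref{prop:characterisation}(i). I expect no real obstacle beyond confirming the positivity of the quadratic coefficient across the whole signed range, including the amplifying regime $s\in(-1,0)$.
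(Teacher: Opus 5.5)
Your proof is correct and follows the paper's route: the paper solves $Q'(m)=0$ to get $(1+s)\,m=\delta$, exactly as you do. Your added observation that $Q''=(1+s)/\lambda>0$ whenever $s>-1$ supplies the strict-convexity justification for uniqueness and global minimality, which the paper's one-line proof leaves implicit.
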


\begin{proof}
$Q'(m)=(m-\delta)/\lambda+(s/\lambda)\,m=0$ gives
$(1+s)\,m=\delta$.
\end{proof}

\section{The Functional}

\begin{definition}[SM functional]
\label{def:functional}
On the spectral submanifold~$M_A$:
\begin{equation}\label{eq:F-total}
  F = \sum_{k=1}^n F_k(\sigma_k,m_k;\Phi_k),
\end{equation}
where the modal free energy is
\begin{equation}\label{eq:Fk}
  F_k =
  \underbrace{
    D_{\mathrm{Stein}}(\sigma_k,\lambda_k)
    + \frac{1}{2\lambda_k}(m_k-\delta_k)^2
  }_{\text{data fidelity}}
  \;+\;
  \frac{s_k}{\lambda_k}
  \underbrace{
    \left[\,
      D_{\mathrm{Stein}}(\sigma_k,1) + \tfrac{1}{2}m_k^2
    \,\right]
  }_{\text{reference coupling}},
\end{equation}
with $s_k=e^{\Phi_k}-1$ and
$D_{\mathrm{Stein}}(\sigma,\lambda)=\sigma/\lambda-\ln(\sigma/\lambda)-1$.
\end{definition}

\section{Joint Equilibrium}

\begin{theorem}[Coupling Identity]
\label{thm:coupling-identity}
For any admissible potential field
$(\Phi_1,\ldots,\Phi_n)\in\mathcal{A}(\lambda)$, the unique minimiser of~$F$ on~$M_A$
satisfies:
\begin{equation}\label{eq:coupling-identity}
  \boxed{\;
    m_k^* = e^{-\Phi_k}\,\delta_k,
    \qquad
    \lambda_k^*-1 = e^{-\Phi_k}\,(\lambda_k-1).
  \;}
\end{equation}
The Gibbs weight $\psi_k=e^{-\Phi_k}$ governs both observables.
Equivalently:
\begin{equation}\label{eq:coupling-tau}
  m_k^*=(1-\tau_k)\,\delta_k, \qquad
  \lambda_k^*=(1-\tau_k)\,\lambda_k+\tau_k,
\end{equation}
with signed deformation coefficient $\tau_k=1-e^{-\Phi_k}$.
\end{theorem}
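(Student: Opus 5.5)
The plan is to exploit the separability of $F$ in Definition~\ref{def:functional}. On $M_A$ the variables are $(\sigma_1,\dots,\sigma_n,m_1,\dots,m_n)\in(0,\infty)^n\times\R^n$, an open product domain. $F=\sum_k F_k(\sigma_k,m_k;\Phi_k)$ with each $F_k$ depending only on its own pair, and each $F_k$ splits further as
\[
F_k=J_{\lambda_k,s_k}(\sigma_k)+Q_{\lambda_k,s_k,\delta_k}(m_k),
\]
where $J$ is the combined loss~\eqref{eq:J-def} with $D=D_{\mathrm{Stein}}$ and $Q$ is the return loss~\eqref{eq:return-loss}. Minimising $F$ therefore reduces to $2n$ independent one-dimensional problems. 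A point minimises $F$ if and only if each coordinate minimises its own term, and uniqueness of the global minimiser follows from uniqueness of each scalar minimiser. I would state this reduction first, noting that a sum of functions of disjoint variables attains its infimum exactly at the product of the individual argmins.

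For the return modes I would invoke Lemma~\ref{lem:return-coupling}. Admissibility gives $1+s_k=e^{\Phi_k}>0$, so $s_k>-1$. The lemma then yields the unique minimiser $m_k^*=\delta_k/(1+s_k)=e^{-\Phi_k}\delta_k$. Strictly speaking, uniqueness needs $Q$ strictly convex, i.e.\ $Q''=(1+s_k)/\lambda_k>0$, and this holds because $1+s_k>0$.

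For the covariance modes I would use the direction (ii)$\Rightarrow$(i) of Proposition~\ref{prop:characterisation} with $c=1$. Admissibility $\Phi\in\mathcal{A}(\lambda)$ is, by Definition~\ref{def:admissible} and Remark~\ref{rem:admissible-equiv}, equivalent to $\lambda_k+s_k>0$, and together with $1+s_k>0$ this gives $s_k>-\min\{1,\lambda_k\}$. The stationarity computation in that proof then gives the critical point $\sigma_k^*=(\lambda_k+s_k)/(1+s_k)>0$.

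The main obstacle is that Proposition~\ref{prop:characterisation} \emph{assumes} a unique minimiser rather than proving one, so existence and uniqueness must be established here. I would compute
\[
J''(\sigma)=\frac{1+s_k/\lambda_k}{\sigma^2}=\frac{\lambda_k+s_k}{\lambda_k\sigma^2}>0 .
\]
This shows $J$ is strictly convex on $(0,\infty)$, so the critical point is the unique global minimiser. Alternatively, one can check that $J\to\infty$ at both ends: the $-\ln\sigma$ coefficient $-(1+s_k/\lambda_k)$ is negative, and the linear coefficient $(1+s_k)/\lambda_k$ is positive.

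Finally, rewriting $\sigma_k^*$ gives $\lambda_k^*-1=(\lambda_k-1)/(1+s_k)=e^{-\Phi_k}(\lambda_k-1)$. The $\tau_k$ form~\eqref{eq:coupling-tau} follows by the substitution $e^{-\Phi_k}=1-\tau_k$.
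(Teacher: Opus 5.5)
Your proposal is correct and follows essentially the same route as the paper: strict convexity of each modal term from the diagonal second derivatives $(1+s_k/\lambda_k)/\sigma_k^2>0$ and $(1+s_k)/\lambda_k>0$ (both guaranteed by admissibility), then the first-order conditions giving $m_k^*=\delta_k/(1+s_k)$ and $\sigma_k^*=(\lambda_k+s_k)/(1+s_k)$. Your explicit splitting $F_k=J_{\lambda_k,s_k}+Q_{\lambda_k,s_k,\delta_k}$ and your observation that Proposition~\ref{prop:characterisation} presupposes rather than proves uniqueness are sound, and they spell out steps the paper's proof leaves implicit.
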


\begin{proof}
By admissibility, $\lambda_k+s_k>0$, so
$(1+s_k/\lambda_k)=(\lambda_k+s_k)/\lambda_k>0$.
Each $F_k$ is therefore strictly convex in $(\sigma_k,m_k)$:
the Hessian is
diagonal with entries $(1+s_k/\lambda_k)/\sigma_k^2>0$ (covariance)
and $(1+s_k)/\lambda_k=e^{\Phi_k}/\lambda_k>0$ (return).
The stationary point is the
unique global minimiser.

\smallskip\noindent
\emph{Returns.} $\partial F_k/\partial m_k=0$:
$\lambda_k^{-1}(m_k-\delta_k)+(s_k/\lambda_k)\,m_k=0$.
Multiply by $\lambda_k$: $(m_k-\delta_k)+s_k m_k=0$, hence
$m_k^*=\delta_k/(1+s_k)=e^{-\Phi_k}\delta_k$.

\smallskip\noindent
\emph{Covariance.} $\partial F_k/\partial\sigma_k=0$:
$(1/\lambda_k+s_k/\lambda_k)-(1+s_k/\lambda_k)/\sigma_k=0$,
hence $\sigma_k^*=(\lambda_k+s_k)/(1+s_k)
=e^{-\Phi_k}\lambda_k+(1-e^{-\Phi_k})$.

\smallskip\noindent
\emph{Coupling.} $\sigma_k^*-1=e^{-\Phi_k}(\lambda_k-1)$.
\end{proof}

\begin{remark}[Two legs of the coupling]
\label{rem:equilibrium}
The return leg $m_k^*=e^{-\Phi_k}\delta_k$ follows from the
elementary quadratic structure
(Lemma~\ref{lem:return-coupling}).  The covariance leg
$\lambda_k^*-1=e^{-\Phi_k}(\lambda_k-1)$ is postulated as the
Common Persistence Principle~(\ref{pr:common-persistence});
Proposition~\ref{prop:characterisation} then shows that within the
stated loss class this requirement \emph{forces} the Stein
divergence.  That both legs produce the same Gibbs
weight~$e^{-\Phi_k}$ is the content of the Coupling Identity:
we postulate what the persistence factor is supposed to \emph{mean} for
covariance; the characterisation theorem determines which
divergence can realise that meaning; and the resulting equilibrium
confirms that return and covariance deformation share the same
persistence factor.
\end{remark}

\section{The Pressure Function}

\begin{definition}[SM pressure functional]
\label{def:pressure}
The SM pressure functional is the negative optimal free energy:
\begin{equation}\label{eq:pressure}
  P(\Phi_1,\ldots,\Phi_n)
  := -F^*
  = -\sum_{k=1}^n F_k^*(\Phi_k).
\end{equation}
\end{definition}

\begin{proposition}[Conjugacy relation]
\label{prop:conjugacy}
By the envelope theorem:
\begin{equation}\label{eq:conjugacy}
  \boxed{\;
    \frac{\partial P_k}{\partial\Phi_k}
    = -\frac{e^{\Phi_k}}{\lambda_k}\,D^{\mathrm{ref}}_{k,\mathrm{opt}},
  \;}
\end{equation}
where $D^{\mathrm{ref}}_{k,\mathrm{opt}}=D_{\mathrm{Stein}}(\lambda_k^*,1)+\frac{1}{2}(m_k^*)^2$
is the residual reference divergence --- how far the regularised
mode remains from the reference state.
\end{proposition}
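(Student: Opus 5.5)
The plan is to apply the envelope theorem mode by mode, exploiting the separability $F=\sum_k F_k$. Because $P=-\sum_k F_k^*(\Phi_k)$ and the $k$-th modal optimum depends only on $\Phi_k$, we have $P_k(\Phi_k):=-F_k^*(\Phi_k)$ and $\partial P/\partial\Phi_k=\partial P_k/\partial\Phi_k$. The whole statement therefore reduces to a one-dimensional computation for a single mode. Write $F_k^*(\Phi_k)=\min_{(\sigma_k,m_k)}F_k(\sigma_k,m_k;\Phi_k)$, where the minimum is over $\sigma_k>0$ and $m_k\in\R$. By Theorem~\ref{thm:coupling-identity}, for every admissible $\Phi_k$ this minimum is attained at the unique interior point $(\lambda_k^*,m_k^*)$, with
\[
\lambda_k^*=\frac{\lambda_k+s_k}{1+s_k},\qquad m_k^*=\frac{\delta_k}{1+s_k}.
\]

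First, I will justify differentiability and the envelope step. The minimiser is an explicit smooth ($C^\infty$) function of $\Phi_k$ on the open interval $\mathcal{A}_k$ from Corollary~\ref{cor:admissible-domain}. This holds because $s_k=e^{\Phi_k}-1$ is smooth, $1+s_k=e^{\Phi_k}>0$, and $\lambda_k+s_k>0$ there. Hence $F_k^*(\Phi_k)=F_k(\lambda_k^*(\Phi_k),m_k^*(\Phi_k);\Phi_k)$ is smooth. Differentiating by the chain rule, the terms $\partial_{\sigma}F_k\cdot d\lambda_k^*/d\Phi_k$ and $\partial_m F_k\cdot dm_k^*/d\Phi_k$ vanish by the first-order conditions established in the proof of Theorem~\ref{thm:coupling-identity}. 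Only the explicit partial derivative in $\Phi_k$ survives.

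Next, I will compute that explicit derivative. In \eqref{eq:Fk}, $\Phi_k$ enters only through the prefactor $s_k/\lambda_k$ of the reference-coupling bracket. Since $ds_k/d\Phi_k=e^{\Phi_k}$, this gives
\[
\frac{dF_k^*}{d\Phi_k}=\frac{e^{\Phi_k}}{\lambda_k}\Bigl[D_{\mathrm{Stein}}(\lambda_k^*,1)+\tfrac12(m_k^*)^2\Bigr]=\frac{e^{\Phi_k}}{\lambda_k}D^{\mathrm{ref}}_{k,\mathrm{opt}}.
\]
Negating yields \eqref{eq:conjugacy}.

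The only delicate point is the envelope step: the minimiser must be interior and differentiable in the parameter, so that the first-order conditions kill the implicit terms. Here this is immediate from the closed form of the minimiser and strict convexity on the admissible domain, so no general Danskin-type argument is needed. As a sanity check, I could instead substitute $\lambda_k^*$ and $m_k^*$ into $F_k$ and differentiate directly. I expect the implicit-derivative contributions to cancel, which confirms the formula and also shows that $\partial P_k/\partial\Phi_k\le 0$, with equality exactly when the regularised mode coincides with the reference.
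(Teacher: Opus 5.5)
Your proposal is correct and matches the paper's proof: the first-order conditions remove the implicit terms, and the explicit dependence of $F_k$ on $\Phi_k$ through $s_k=e^{\Phi_k}-1$ gives $dF_k^*/d\Phi_k=(e^{\Phi_k}/\lambda_k)\,D^{\mathrm{ref}}_{k,\mathrm{opt}}$, which you then negate. Your use of the closed-form minimiser to show it depends smoothly on $\Phi_k$ (so the envelope step is legitimate) is a justification the paper leaves implicit.
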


\begin{proof}
The only $\Phi_k$-dependence in $F_k$ enters through
$s_k=e^{\Phi_k}-1$. Since $\partial F_k/\partial\sigma_k=0$ and
$\partial F_k/\partial m_k=0$ at the optimum,
$dF_k^*/d\Phi_k
=(\partial F_k/\partial s_k)\cdot(ds_k/d\Phi_k)
=(D^{\mathrm{ref}}_{k,\mathrm{opt}}/\lambda_k)\cdot e^{\Phi_k}$.
\end{proof}

\begin{remark}[Interpretation]
\label{rem:conjugacy-interpretation}
This relation justifies treating~$\Phi_k$ as a genuine control
parameter of the theory: it has a well-defined conjugate response.
The structure $(\Phi_k,\,D^{\mathrm{ref}}_{k,\mathrm{opt}})$ is the spectral-potential analogue
of a conjugate pair in the variational formalism.

The term \emph{pressure} in SM denotes the negative value function
of the variational problem ($P_k(0)=0$, $P_k(\Phi_k)<0$ for
non-trivial modes at $\Phi_k>0$).  It is used in deliberate analogy
with thermodynamic formalism; its absolute sign carries no
thermodynamic interpretation.

The conjugacy relation is intrinsic to the SM variational problem.
An exact embedding into Ruelle's thermodynamic formalism would
additionally require the identification of an underlying dynamical
system, invariant measure, entropy functional, and potential for
which the SM pressure satisfies the corresponding variational
principle.

\smallskip\noindent
\emph{Limits.}
As $\Phi_k\to\infty$:
$D^{\mathrm{ref}}_{k,\mathrm{opt}}=O(e^{-2\Phi_k})$, so
$e^{\Phi_k}D^{\mathrm{ref}}_{k,\mathrm{opt}}=O(e^{-\Phi_k})\to 0$ and
$\partial P_k/\partial\Phi_k\to 0$.
As $\Phi_k\to 0$: the mode retains all data information,
$D^{\mathrm{ref}}_{k,\mathrm{opt}}\to D_{\mathrm{Stein}}(\lambda_k,1)+\delta_k^2/2$, the full
data--reference divergence.
\end{remark}

\begin{proposition}[Monotonicity, convexity, and analyticity of
the modal pressure]
\label{prop:convexity}
For any non-trivial mode $(\lambda_k,\delta_k)\neq(1,0)$ and all
admissible~$\Phi_k$ (i.e.\
$\Phi_k>\ln(1-\lambda_k)$ if $\lambda_k<1$,
$\Phi_k\in\R$ if $\lambda_k\ge 1$):
\begin{enumerate}[nosep,label=(\roman*)]
\item \emph{Closed form.}
\begin{equation}\label{eq:Fstar}
  F_k^*(\Phi_k) =
  \frac{\lambda_k+e^{\Phi_k}-1}{\lambda_k}
  \ln\frac{e^{\Phi_k}}{\lambda_k+e^{\Phi_k}-1}
  +\ln\lambda_k
  +\frac{(1-e^{-\Phi_k})\delta_k^2}{2\lambda_k}.
\end{equation}
\item \emph{Boundary values.}
$P_k(0)=0$ and
\begin{equation}\label{eq:P-limit}
  P_k(\Phi_k)\;\to\;
  -\!\left[D_{\mathrm{Stein}}(1,\lambda_k)+\frac{\delta_k^2}{2\lambda_k}\right]
  \quad\text{as }\Phi_k\to\infty.
\end{equation}
\item \emph{Strict monotonicity.}
For every admissible~$\Phi_k$, $P_k'(\Phi_k)<0$ for a non-trivial
mode, while $P_k'(\Phi_k)\to 0^-$ as $\Phi_k\to\infty$.
\item \emph{Strict convexity.}
The modal susceptibility is
\begin{equation}\label{eq:susceptibility}
  \chi_k(\Phi_k)
  :=P_k''(\Phi_k)
  =\frac{e^{\Phi_k}}{\lambda_k}
  \!\left[\,
    \ln\lambda_k^*+\frac{1}{\lambda_k^*}-1
    +\frac{e^{-2\Phi_k}\delta_k^2}{2}
  \,\right]
  > 0,
\end{equation}
where $\lambda_k^*=1+e^{-\Phi_k}(\lambda_k-1)$.
\item \emph{Analyticity.}
$P_k(\Phi_k)$ is real-analytic on the entire admissible interval,
for every $\lambda_k>0$.
\end{enumerate}
\end{proposition}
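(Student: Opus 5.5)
The plan is to compute everything explicitly from the minimiser in Theorem~\ref{thm:coupling-identity} and then read off (ii)--(v) by elementary calculus. Throughout, write $x:=e^{\Phi_k}$, so that $s_k=x-1$, $m_k^*=\delta_k/x$ and $\lambda_k^*=(\lambda_k+x-1)/x$. Admissibility is exactly $\lambda_k+x-1>0$ (Corollary~\ref{cor:admissible-domain}).

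\emph{Closed form (i).} I would substitute $(\lambda_k^*,m_k^*)$ into \eqref{eq:Fk} and treat the covariance and return parts separately. In the covariance part the linear terms combine as
\[
\lambda_k^*/\lambda_k+s_k\lambda_k^*/\lambda_k=\lambda_k^*x/\lambda_k=(\lambda_k+s_k)/\lambda_k .
\]
This cancels against the constants $-1-s_k/\lambda_k$. What remains is $\ln\lambda_k-\frac{\lambda_k+s_k}{\lambda_k}\ln\lambda_k^*$, which is the first two terms of \eqref{eq:Fstar}. The return part is
\[
\frac{\delta_k^2}{2\lambda_k x^2}\bigl[(1-x)^2+(x-1)\bigr]=\frac{(1-x^{-1})\delta_k^2}{2\lambda_k}.
\]

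\emph{Boundary values and monotonicity (ii), (iii).} At $x=1$ we have $\lambda_k^*=\lambda_k$ and the formula gives $F_k^*=0$, so $P_k(0)=0$. For $x\to\infty$ I write
\[
\ln\frac{x}{\lambda_k+x-1}=\ln\Bigl(1-\frac{\lambda_k-1}{\lambda_k+x-1}\Bigr).
\]
Expanding to first order, the product with $(\lambda_k+x-1)/\lambda_k$ tends to $-(\lambda_k-1)/\lambda_k$. Hence $F_k^*\to 1/\lambda_k-1+\ln\lambda_k+\delta_k^2/(2\lambda_k)=D_{\mathrm{Stein}}(1,\lambda_k)+\delta_k^2/(2\lambda_k)$, which gives \eqref{eq:P-limit}.

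For monotonicity I would use Proposition~\ref{prop:conjugacy}:
\[
P_k'=-\frac{x}{\lambda_k}\Bigl[\lambda_k^*-\ln\lambda_k^*-1+\frac{\delta_k^2}{2x^2}\Bigr].
\]
The bracket is $D_{\mathrm{Stein}}(\lambda_k^*,1)+\tfrac12(m_k^*)^2$. It vanishes only if $\lambda_k^*=1$ and $\delta_k=0$. Since $\lambda_k^*-1=x^{-1}(\lambda_k-1)$, that happens only when $\lambda_k=1$, i.e.\ for the trivial mode. So $P_k'<0$ for every non-trivial mode. The decay $P_k'\to0^-$ follows because $D_{\mathrm{Stein}}(\lambda_k^*,1)=O((\lambda_k^*-1)^2)=O(x^{-2})$ and the return term is $O(x^{-2})$, so the prefactor $x$ leaves $O(x^{-1})$.

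\emph{Convexity (iv) and analyticity (v).} I would differentiate $P_k'$ once more, using $d\lambda_k^*/d\Phi_k=-(\lambda_k^*-1)$ and $d(x^{-2})/d\Phi_k=-2x^{-2}$. This gives
\[
P_k''=\frac{x}{\lambda_k}\Bigl[(\lambda_k^*-1)^2/\lambda_k^*-\lambda_k^*+\ln\lambda_k^*+1+\frac{\delta_k^2}{2x^2}\Bigr].
\]
The algebraic identity $(\lambda_k^*-1)^2/\lambda_k^*-\lambda_k^*+1=1/\lambda_k^*-1$ turns this into \eqref{eq:susceptibility}. Positivity follows because $g(y)=\ln y+1/y-1$ is strictly convex with unique minimum $g(1)=0$. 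As in (iii), the bracket is strictly positive unless $\lambda_k=1$ and $\delta_k=0$.

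For analyticity, \eqref{eq:Fstar} is built from $e^{\Phi_k}$, rational operations and $\ln$. The logarithm is applied only to $e^{\Phi_k}/(\lambda_k+e^{\Phi_k}-1)$, which is positive exactly on the admissible interval. Compositions of real-analytic maps are real-analytic, so $P_k$ is analytic there.

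The main obstacle is bookkeeping rather than ideas. The step most prone to error is the second differentiation and its simplification to the $g(\lambda_k^*)$ form. I would double-check it by differentiating the closed form \eqref{eq:Fstar} directly, and by checking the limit $\Phi_k\to0$ against Remark~\ref{rem:conjugacy-interpretation}.
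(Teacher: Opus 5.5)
Your proposal is correct and follows the paper's proof essentially step for step:
\begin{itemize}
\item for (i), substitute the Coupling-Identity minimiser into the modal free energy;
\item for (ii), evaluate the closed form at $\Phi_k=0$ and as $\Phi_k\to\infty$;
\item for (iii), invoke the conjugacy relation;
\item for (iv), differentiate that relation once more using $d\lambda_k^*/d\Phi_k=-(\lambda_k^*-1)$ and the identity $(y-1)^2/y-y+1=1/y-1$;
\item for (v), use composition of analytic maps.
\end{itemize}
Your check that the bracket in $P_k'$ vanishes only for $(\lambda_k,\delta_k)=(1,0)$ gives the strict inequality. The paper's appeal to $D^{\mathrm{ref}}_{k,\mathrm{opt}}\ge 0$ by itself yields only $P_k'\le 0$, so your argument improves on it here.

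One justification needs fixing. The function $g(y)=\ln y+1/y-1$ is \emph{not} strictly convex on $(0,\infty)$, since $g''(y)=(2-y)/y^3<0$ for $y>2$, and $\lambda_k^*$ can exceed $2$. The unique global minimum $g(1)=0$ follows instead from the sign of $g'(y)=(y-1)/y^2$, which is how the paper argues.
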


\begin{proof}
\emph{(i)}~follows by substituting $(\sigma_k^*,m_k^*)$ from
Theorem~\ref{thm:coupling-identity} into~\eqref{eq:Fk} and
simplifying. \emph{(ii)}~follows from (i) at $\Phi_k=0$ and
$\Phi_k\to\infty$, where $\lambda_k^*\to 1$, $m_k^*\to 0$.
\emph{(iii)}~is Proposition~\ref{prop:conjugacy}, since
$D^{\mathrm{ref}}_{k,\mathrm{opt}}\ge 0$. For \emph{(iv)}, set $x=\lambda_k^*$. Then
$dx/d\Phi_k=-(x-1)$, and
\[
  \frac{d}{d\Phi_k}D_{\mathrm{Stein}}(x,1)
  =\left(1-\frac{1}{x}\right)\!\cdot\!(-(x-1))
  =-\frac{(x-1)^2}{x}.
\]
Combined with $(d/d\Phi_k)(e^{-2\Phi_k}\delta_k^2/2)
=-e^{-2\Phi_k}\delta_k^2$, this gives
$P_k''=-(e^{\Phi_k}/\lambda_k)[D^{\mathrm{ref}}_{k,\mathrm{opt}}+(d/d\Phi_k)D^{\mathrm{ref}}_{k,\mathrm{opt}}]$.
The bracket reduces to $D_{\mathrm{Stein}}(x,1)-(x-1)^2/x
-e^{-2\Phi_k}\delta_k^2/2$.
But
\[
  \frac{(x-1)^2}{x}-D_{\mathrm{Stein}}(x,1)
  =\frac{(x-1)^2}{x}-x+\ln x+1
  =\ln x+\frac{1}{x}-1,
\]
and the function $g(x)=\ln x+1/x-1$ satisfies $g'(x)=(x-1)/x^2$,
so $g$ has its unique global minimum at $x=1$ with $g(1)=0$.
Hence
$P_k''=(e^{\Phi_k}/\lambda_k)[\ln x+1/x-1
+e^{-2\Phi_k}\delta_k^2/2]>0$.
\emph{(v)}~follows because the closed
form~\eqref{eq:Fstar} is composed of exponentials and logarithms
of strictly positive analytic functions of~$\Phi_k$.
\end{proof}

\begin{remark}[No intrinsic pressure singularities in SM]
\label{rem:no-phase-transition}
Since each $P_k$ is real-analytic and the finite sum
$P=\sum_k P_k$ preserves analyticity, finite-dimensional SM has no
intrinsic pressure singularities.
A non-analyticity can nonetheless arise through the spectral
\emph{calibration} $\lambda\mapsto\psi(\lambda)\mapsto\Phi_k(\lambda)$
--- for instance a hard Bulk/Spike rule at the BBP threshold ---
even though $P(\Phi)$ itself is analytic. This is a
calibration-induced non-analyticity (and, for a hard threshold,
possibly a discontinuity), not an intrinsic phase transition of the
pressure functional.
Singular behaviour intrinsic to the pressure, if present in
General Markowitz, would require additional structure beyond
the SM assumptions --- for example mode interactions together
with loss of uniqueness or an appropriate large-system limit.
\end{remark}

\begin{proposition}[Additivity]
\label{prop:additivity}
In SM the pressure is additive across modes:
$P=\sum_k P_k(\Phi_k)$. The cross-susceptibility vanishes:
\begin{equation}\label{eq:cross-zero}
  \frac{\partial^2 P}{\partial\Phi_j\,\partial\Phi_k}=0,
  \quad j\neq k.
\end{equation}
The modes are variationally decoupled. This is the defining property of
Special Markowitz (Definition~\ref{def:SM}).
\end{proposition}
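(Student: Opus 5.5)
The plan is to exploit the product structure of the optimisation problem on the spectral submanifold~$M_A$. Parametrise $M_A\times\R^n$ by the modal variables $(\sigma_k,m_k)_{k=1}^n\in\prod_k\bigl((0,\infty)\times\R\bigr)$. By Definition~\ref{def:functional}, $F=\sum_k F_k(\sigma_k,m_k;\Phi_k)$, and each $F_k$ depends only on its own pair $(\sigma_k,m_k)$ and its own potential $\Phi_k$; the data $\lambda_k,\delta_k$ are fixed. The admissible domain $\mathcal{A}(\lambda)$ is itself a product of intervals (Corollary~\ref{cor:admissible-domain}). So for every admissible field the infimum of a sum of functions of disjoint variable blocks over a product domain splits:
\[
F^*(\Phi)=\inf_{(\sigma,m)}\sum_k F_k=\sum_k\inf_{(\sigma_k,m_k)}F_k=\sum_k F_k^*(\Phi_k).
\]
Both infima are attained. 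Theorem~\ref{thm:coupling-identity} gives the unique minimiser, and its $k$-th component is exactly the unique minimiser of $F_k$ alone, since the proof there is mode-by-mode. Hence $P(\Phi)=-F^*(\Phi)=\sum_k P_k(\Phi_k)$ with $P_k:=-F_k^*$, consistent with Definition~\ref{def:pressure}.

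For the cross-susceptibility, I note that $P_k$ is a function of $\Phi_k$ alone. It is real-analytic, hence $C^2$, on its admissible interval by Proposition~\ref{prop:convexity}(v). Therefore $\partial P/\partial\Phi_j=P_j'(\Phi_j)$, which does not depend on $\Phi_k$ for $k\neq j$, and differentiating again gives $\partial^2P/\partial\Phi_j\partial\Phi_k=0$ for $j\neq k$. The mixed partial is well defined because $\mathcal{A}(\lambda)$ is open, so each point has a product neighbourhood on which $P$ is $C^2$. The same computation gives the Hessian $\mathrm{diag}(\chi_1,\ldots,\chi_n)$ with $\chi_k$ as in \eqref{eq:susceptibility}. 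I would state this as a by-product.

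The only step needing care is the interchange of infimum and sum. It requires the feasible set to be a genuine Cartesian product of modal domains, with no coupling constraint. On $M_A$ the eigenvectors $U$ are frozen and only $\sigma_k>0$ and $m_k\in\R$ vary independently, so the product structure holds. I would make this explicit by noting that the map $(\sigma,m)\mapsto(U\,\mathrm{diag}(\sigma)U^\top,\,U m)$ is a bijection onto the admissible state space. Everything else is immediate.
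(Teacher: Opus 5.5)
Your proposal is correct and matches the paper's implicit argument. The paper gives no separate proof: it treats additivity as immediate from the separability $F=\sum_k F_k$ and from Definition~\ref{def:pressure}, which already writes $-F^*=-\sum_k F_k^*(\Phi_k)$, so the mixed partials vanish trivially. Your interchange of infimum and sum over the Cartesian product $\prod_k\bigl((0,\infty)\times\R\bigr)$, with attainment supplied by Theorem~\ref{thm:coupling-identity}, makes explicit the step the paper leaves unstated.
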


\begin{remark}[General Markowitz]
\label{rem:GM}
When the eigendirections of~$A$ are not stable (non-Gaussian
returns, time-varying correlations), the spectral decomposition no
longer diagonalises the problem. Instability of the eigenbasis may
induce interaction terms $J_{jk}V_{jk}$ between modes, breaking the
additive structure:
$\partial^2 P/\partial\Phi_j\partial\Phi_k\neq 0$.
General Markowitz is defined by the presence of such interactions.
\end{remark}

\section{Special Markowitz}

\begin{definition}[Special Markowitz]
\label{def:SM}
\textbf{Special Markowitz} is defined by separability of the free
energy: admissible spectral potentials act independently across the
eigendirections of the whitened relative operator,
\begin{equation}\label{eq:SM-def}
  F = \sum_k F_k.
\end{equation}
Consequently, the optimal pressure is additive,
$P=\sum_k P_k$,
and all cross-susceptibilities vanish:
$\partial^2 P/\partial\Phi_j\partial\Phi_k=0$ for $j\neq k$.
\end{definition}

\section{Properties}

\begin{proposition}
\label{prop:properties}
For any admissible potential field:
\begin{enumerate}[nosep,label=(\roman*)]
\item \emph{Neutral modes: signal preservation.}
For $\Phi_k= 0$: $\psi_k= 1$. Return and eigenvalue
deviation are preserved exactly.
\item \emph{Attenuating modes ($\Phi_k>0$): noise suppression.}
$\psi_k=e^{-\Phi_k}< 1$. Return and
eigenvalue deviation are contracted toward the reference.
As $\Phi_k\to\infty$: $\lambda_k^*\to 1$, $m_k^*\to 0$.
\item \emph{Amplifying modes ($\Phi_k<0$): signal enhancement.}
$\psi_k=e^{-\Phi_k}> 1$. The empirical
deviation $\lambda_k-1$ is amplified beyond its data value,
subject to admissibility~\eqref{eq:admissible}.
\item \emph{Deformation scaling.}
$|\lambda_k^*-1|=\psi_k\,|\lambda_k-1|$,
$|m_k^*|=\psi_k\,|\delta_k|$.
The persistence factor controls the magnitude of the deviation
from the reference, in both covariance and return.
\item \emph{Monotonicity.}
$\psi_k$ is strictly decreasing in~$\Phi_k$.  Increasing the
potential always moves the mode toward the reference.
\end{enumerate}
\end{proposition}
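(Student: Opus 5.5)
All five items follow directly from the Coupling Identity (Theorem~\ref{thm:coupling-identity}). I will take its two equations
\[
m_k^*=\psi_k\delta_k,\qquad \lambda_k^*-1=\psi_k(\lambda_k-1),\qquad \psi_k=e^{-\Phi_k},
\]
as the starting point. These hold for any admissible $\Phi\in\mathcal{A}(\lambda)$ and any sign of $\Phi_k$. Each item is then a case analysis on the value of $\psi_k$. For all of them I fix an admissible field, so that the unique minimiser exists and is given by these formulas, and I argue mode by mode.

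\textbf{Items (i)--(iii).} For (i), set $\Phi_k=0$. Then $\psi_k=1$, so $m_k^*=\delta_k$ and $\lambda_k^*=\lambda_k$; equivalently $s_k=0$ and the reference coupling term in \eqref{eq:Fk} drops out.

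For (ii), take $\Phi_k>0$. Strict monotonicity of the exponential gives $\psi_k<1$, so both deviations are multiplied by a factor in $(0,1)$. Letting $\Phi_k\to\infty$ gives $\psi_k\to0$, hence $\lambda_k^*\to1$ and $m_k^*\to0$. This limit is allowed for every $\lambda_k$, because Corollary~\ref{cor:admissible-domain} shows that each admissible interval is unbounded above.

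For (iii), take $\Phi_k<0$, so $\psi_k>1$. The identity again gives $|\lambda_k^*-1|>|\lambda_k-1|$ whenever $\lambda_k\neq1$. When $\lambda_k<1$, the lower bound $\Phi_k>\ln(1-\lambda_k)$ from Corollary~\ref{cor:admissible-domain} is exactly what keeps $\lambda_k^*>0$, which is condition \eqref{eq:admissible}.

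\textbf{Item (iv).} Take absolute values in both identities and use $\psi_k>0$, so that $|\psi_k x|=\psi_k|x|$.

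\textbf{Item (v).} Compute $d\psi_k/d\Phi_k=-e^{-\Phi_k}<0$. Combined with (iv), $|\lambda_k^*-1|$ and $|m_k^*|$ are non-increasing in $\Phi_k$, and strictly decreasing unless the corresponding empirical deviation vanishes. The signs of $\lambda_k^*-1$ and $m_k^*$ are fixed by those of $\lambda_k-1$ and $\delta_k$, since $\psi_k>0$. So increasing $\Phi_k$ moves each component monotonically toward the reference value, without overshooting it.

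\textbf{Main point of care.} There is no real obstacle here. The only things to watch are, first, that in (ii) the limit $\Phi_k\to\infty$ stays inside the admissible domain, and second, that (iii) and (v) are read with the qualifier ``strictly'' only for non-trivial deviations ($\lambda_k\ne1$ or $\delta_k\ne0$). For a trivial component, the deviation is zero and stays zero for every $\Phi_k$.
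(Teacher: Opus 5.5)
Your proposal is correct and takes the route the paper intends. The paper gives no separate proof, treating these properties as immediate consequences of the Coupling Identity (Theorem~\ref{thm:coupling-identity}). Your case analysis on $\psi_k=e^{-\Phi_k}$, together with the admissibility checks from Corollary~\ref{cor:admissible-domain} and the strictness caveat for trivial deviations, supplies exactly those omitted details.
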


\section{The Relative Markowitz Problem}

SM operates relative to a reference state. The natural portfolio
problem is the relative Markowitz problem in whitened spectral
coordinates: maximise the regularised excess return over the
reference, penalised by regularised risk:
\begin{equation}\label{eq:relative-markowitz}
  w^{SM}
  := \arg\max_w
  \left\{w^\top m^*-\tfrac{1}{2}w^\top\Lambda^* w\right\}
  = (\Lambda^*)^{-1}m^*.
\end{equation}
Modally: $w_k^{SM}=m_k^*/\lambda_k^*$.

\begin{proposition}[Three equivalent views]
\label{prop:three-views}
\begin{equation}\label{eq:wk}
  w_k^{SM}
  = \frac{\delta_k}{\lambda_k+e^{\Phi_k}-1}
  = \frac{\lambda_k}{\lambda_k+e^{\Phi_k}-1}\cdot
    \frac{\delta_k}{\lambda_k}
  = R_k\cdot w_k^{\mathrm{Markowitz}}.
\end{equation}
\end{proposition}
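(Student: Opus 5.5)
The plan is a direct substitution of the Coupling Identity into the modal portfolio formula, followed by elementary algebra. By \eqref{eq:relative-markowitz}, the weights are defined as $w_k^{SM}=m_k^*/\lambda_k^*$. The identification is diagonal in the eigenbasis of $A$ because $\Lambda^*=\mathrm{diag}(\lambda_1^*,\ldots,\lambda_n^*)$ on $M_A$. The maximiser is unique because $\Lambda^*$ is positive definite, which holds since $\lambda_k^*>0$ for every $k$ by admissibility (Definition~\ref{def:admissible}). The objective is then strictly concave, so the first-order condition $m^*-\Lambda^*w=0$ gives $w=(\Lambda^*)^{-1}m^*$.

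Next I insert Theorem~\ref{thm:coupling-identity}, which gives $m_k^*=e^{-\Phi_k}\delta_k$ and $\lambda_k^*=1+e^{-\Phi_k}(\lambda_k-1)$. Multiplying numerator and denominator by $e^{\Phi_k}$ gives
\[
  w_k^{SM}=\frac{e^{-\Phi_k}\delta_k}{1+e^{-\Phi_k}(\lambda_k-1)}
  =\frac{\delta_k}{e^{\Phi_k}+\lambda_k-1}.
\]
This is the first equality. The denominator equals $\lambda_k+s_k=e^{\Phi_k}\lambda_k^*>0$, so the expression is well defined exactly on $\mathcal{A}(\lambda)$ (Remark~\ref{rem:admissible-equiv}).

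The second equality comes from multiplying and dividing by $\lambda_k>0$. For the third, I identify $\delta_k/\lambda_k$ as the unregularised relative Markowitz weight, namely \eqref{eq:relative-markowitz} evaluated with $(m,\Lambda)=(\delta,\Lambda)$, which is the case $\Phi\equiv0$. I then define $R_k:=\lambda_k/(\lambda_k+e^{\Phi_k}-1)$. I would also check that $R_k=1$ at $\Phi_k=0$, consistent with Proposition~\ref{prop:properties}(i).

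There is no genuine obstacle here. The only point needing care is to state the definitions of $w_k^{\mathrm{Markowitz}}$ and $R_k$ explicitly, since the paper has not yet introduced them, and to ensure the denominators are nonzero on the admissible domain.
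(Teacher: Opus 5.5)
Your proposal is correct and matches the paper's proof: substitute the Coupling Identity into $w_k^{SM}=m_k^*/\lambda_k^*$, clear the factor $e^{-\Phi_k}$ to obtain $\delta_k/(\lambda_k+e^{\Phi_k}-1)$, then factor out $\lambda_k$ to identify $R_k$ and $w_k^{\mathrm{Markowitz}}=\delta_k/\lambda_k$. One small imprecision is that admissibility is exactly where the denominator $\lambda_k+s_k=e^{\Phi_k}\lambda_k^*$ is \emph{positive}, whereas the formula is defined wherever that denominator is nonzero; this does not affect the argument.
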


\begin{proof}
$w_k=e^{-\Phi_k}\delta_k/(e^{-\Phi_k}\lambda_k+1-e^{-\Phi_k})
=\delta_k/(\lambda_k+e^{\Phi_k}-1)$.
\end{proof}

\begin{remark}[Three interpretations]
\leavevmode
\begin{enumerate}[nosep,label=(\roman*)]
\item \emph{Spectral ridge.}
$w_k=\delta_k/(\lambda_k+s_k)$: Markowitz with a
potential-dependent ridge $s_k=e^{\Phi_k}-1$.
\item \emph{Spectral gate.}
$w_k=R_k\,w_k^{\mathrm{Markowitz}}$ with gate
$R_k=\lambda_k/(\lambda_k+s_k)$.
For $\Phi_k\ge 0$: $R_k\in[0,1]$ (attenuation).
For $\Phi_k<0$: $R_k>1$ (amplification beyond Markowitz).
\item \emph{Joint deformation.}
The same $\psi_k=e^{-\Phi_k}$ scales the return and the covariance
deviation from the reference.
\end{enumerate}
\end{remark}

\section{The Uniform Limit}

\begin{corollary}
\label{cor:uniform}
For $\Phi_k=\Phi$ for all~$k$ (uniform spectral potential):
$\psi_k=e^{-\Phi}$, and
\begin{align}
  m_k^* &= e^{-\Phi}\,\delta_k, \label{eq:uniform-m}\\
  \lambda_k^* &= e^{-\Phi}\,\lambda_k+(1-e^{-\Phi}).
  \label{eq:uniform-lam}
\end{align}
The portfolio weight becomes
$w_k=\delta_k/(\lambda_k+e^{\Phi}-1)$ --- a constant ridge.
This is the sanity check: without spectral discrimination,
SM reduces to isotropic regularisation.
\end{corollary}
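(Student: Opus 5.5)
This follows by specialising Theorem~\ref{thm:coupling-identity} to the constant field $\Phi_k=\Phi$ for every $k$. The first step is to check that such a field is admissible, so that the theorem applies. By Corollary~\ref{cor:admissible-domain}, $(\Phi,\ldots,\Phi)\in\mathcal{A}(\lambda)$ exactly when $\Phi>\ln(1-\lambda_k)$ for every $k$ with $\lambda_k<1$. Equivalently, $\Phi>\max_{k:\lambda_k<1}\ln(1-\lambda_k)$, with no constraint when every $\lambda_k\ge1$. This condition holds in particular for all $\Phi\ge0$, and I will state the corollary under it, which is the implicit hypothesis ``admissible''.

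Under this hypothesis, the Coupling Identity gives $m_k^*=e^{-\Phi_k}\delta_k=e^{-\Phi}\delta_k$, which is \eqref{eq:uniform-m}. It also gives $\lambda_k^*-1=e^{-\Phi}(\lambda_k-1)$. Rearranging the latter yields $\lambda_k^*=e^{-\Phi}\lambda_k+(1-e^{-\Phi})$, which is \eqref{eq:uniform-lam}; this is the $\tau$-form \eqref{eq:coupling-tau} with the common coefficient $\tau=1-e^{-\Phi}$. The persistence factors are then $\psi_k=e^{-\Phi}$ for all $k$, directly from Definition~\ref{def:signed-potential}.

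For the portfolio weight, I substitute $\Phi_k=\Phi$ into Proposition~\ref{prop:three-views}, which gives $w_k=\delta_k/(\lambda_k+e^{\Phi}-1)$. The ridge $s=e^{\Phi}-1$ does not depend on $k$. Undoing the eigenbasis and whitening, this is the single matrix identity $w=(A+sI)^{-1}\delta\mu_w$, i.e.\ a ridge that is isotropic in whitened coordinates, which justifies the ``constant ridge'' and ``isotropic regularisation'' phrasing.

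There is no real obstacle here. The only point needing care is the admissibility of the uniform field when some $\lambda_k<1$ and $\Phi<0$, and I handle it by the explicit lower bound above. As an optional consistency remark, I would note that \eqref{eq:uniform-lam} is the spectral form of the convex combination $\Sigma^*_w=e^{-\Phi}A+(1-e^{-\Phi})I$. This recovers the scalar barycentric picture \eqref{eq:mot-bary-Phi} of Section~\ref{sec:motivation}.
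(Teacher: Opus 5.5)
Your proposal is correct and matches the paper, which gives no separate proof but treats the corollary as the immediate specialisation of Theorem~\ref{thm:coupling-identity} and Proposition~\ref{prop:three-views} to $\Phi_k=\Phi$. Your explicit admissibility check ($\Phi>\ln(1-\lambda_{\min})$ when some $\lambda_k<1$) is a useful addition that the paper leaves implicit.

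One caution on your optional remark. The interpolation $e^{-\Phi}A+(1-e^{-\Phi})I$ matches the weights of \eqref{eq:mot-bary-Phi} but not its geometry. The paper stresses that the KL barycentre interpolates natural parameters (precisions), not $\Sigma$ linearly, so "recovers the barycentric picture" holds only at the level of the weights.
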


\section{Connection to Existing Methods}
\label{sec:connections}

\begin{proposition}
\label{prop:connections}
\leavevmode
\begin{enumerate}[nosep,label=(\roman*)]
\item \emph{Ledoit--Wolf.}
Linear Ledoit--Wolf covariance shrinkage is obtained as the
uniform-potential, covariance-only limit of SM
(Section~\ref{sec:lw}).
\item \emph{Black--Litterman-like limiting case.}
When $A=I$ and $\Phi_k=\Phi$:
$\lambda_k^*=1$, $m_k^*=e^{-\Phi}\delta_k$ --- isotropic return
shrinkage.
\item \emph{Ridge regression.}
Markowitz with a constant ridge~$\alpha$:
$w_k=\delta_k/(\lambda_k+\alpha)$.
SM replaces the constant by a potential-dependent
$s_k=e^{\Phi_k}-1$.
\end{enumerate}
\end{proposition}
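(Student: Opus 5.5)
The plan is to prove the three items one at a time. Each is a specialisation of the Coupling Identity (Theorem~\ref{thm:coupling-identity}) or of the portfolio formula (Proposition~\ref{prop:three-views}). Item~(iii) is already a restatement, and item~(ii) is a direct substitution. The only part that needs real work is item~(i), the Ledoit--Wolf limit: there we must say precisely what ``uniform-potential, covariance-only limit'' means, and then match SM's output against the linear Ledoit--Wolf estimator.

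For (i), take $\Phi_k=\Phi$ for every $k$. Corollary~\ref{cor:uniform} gives $\lambda_k^*=e^{-\Phi}\lambda_k+(1-e^{-\Phi})$, that is, $\Lambda^*=(1-\tau)\Lambda+\tau I$ with $\tau=1-e^{-\Phi}$. I will undo the eigenbasis rotation using $U\Lambda^*U^\top=(1-\tau)A+\tau I$, which holds because $A=U\Lambda U^\top$ and $UU^\top=I$. Then I undo the whitening of Definition~\ref{def:whitened-coords}, conjugating by $\Sigma_{\mathrm{ref}}^{1/2}$. This gives
\[
\Sigma^*=\Sigma_{\mathrm{ref}}^{1/2}\bigl[(1-\tau)A+\tau I\bigr]\Sigma_{\mathrm{ref}}^{1/2}=(1-\tau)\hat\Sigma+\tau\,\Sigma_{\mathrm{ref}}.
\]
This is exactly the linear Ledoit--Wolf form, with shrinkage intensity $\tau$ and target $\Sigma_{\mathrm{ref}}$. ``Covariance-only'' means we simply discard the return leg, or equivalently set $\delta=0$. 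The correspondence with the parameter chain of Remark~\ref{rem:mot-chain} is $\alpha=\tau$. For the usual range $\alpha\in[0,1)$ we have $\Phi=-\ln(1-\alpha)\ge0$, which is always admissible by Corollary~\ref{cor:admissible-domain}. I expect the main obstacle to be exactly this bookkeeping: the whitened-spectral form has to be converted back to original coordinates, and this works only because a uniform potential makes $\Lambda^*$ an affine function of $\Lambda$. That affineness is what lets $\Lambda^*$ commute through $U$ independently of the eigenbasis. For a non-uniform $\Phi$ this step fails, and the same computation instead yields a spectral (non-linear) shrinker.

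For (ii), set $A=I$, so $\lambda_k=1$ for all $k$. Theorem~\ref{thm:coupling-identity} then gives $\lambda_k^*-1=e^{-\Phi}\cdot0=0$ and $m_k^*=e^{-\Phi}\delta_k$. Mapping back as in (i), $\Sigma^*=\Sigma_{\mathrm{ref}}$ and $\mu^*=\mu_{\mathrm{ref}}+e^{-\Phi}(\hat\mu-\mu_{\mathrm{ref}})$. This is a precision-weighted blend of the data mean and the prior mean with equal covariances, which is the Black--Litterman-type posterior mean; I will only claim it as ``like'' that posterior, as the statement does.

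For (iii), Proposition~\ref{prop:three-views} gives $w_k=\delta_k/(\lambda_k+e^{\Phi_k}-1)=\delta_k/(\lambda_k+s_k)$, which is the ridge formula with $\alpha$ replaced by $s_k$. A constant ridge $\alpha>-\min_k\lambda_k$ is recovered with $\Phi=\ln(1+\alpha)$ uniformly, consistent with Corollary~\ref{cor:uniform}.
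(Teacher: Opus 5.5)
Your proposal is correct and follows the paper's route. Item (i) is exactly the proof of Proposition~\ref{prop:lw}: apply Corollary~\ref{cor:uniform}, lift $\Lambda^*=(1-\tau)\Lambda+\tau I$ to $A^*=(1-\tau)A+\tau I$, then unwhiten. Items (ii) and (iii) are the direct substitutions into Theorem~\ref{thm:coupling-identity} and Proposition~\ref{prop:three-views} that the paper leaves implicit.

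There is one small slip in a side remark. A constant ridge is representable only for $\alpha>-\min\{1,\min_k\lambda_k\}$, not $\alpha>-\min_k\lambda_k$, because $\Phi=\ln(1+\alpha)$ also needs $1+\alpha>0$. Also, as the paper notes, the \emph{classical} linear Ledoit--Wolf form additionally requires the target $\Sigma_{\mathrm{ref}}=\mu I$.
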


\subsection{Ledoit--Wolf Shrinkage}
\label{sec:lw}

\begin{proposition}[Linear Ledoit--Wolf as uniform-potential SM]
\label{prop:lw}
Let $\tau\in[0,1)$ and set $\Phi_k=\Phi:=-\ln(1-\tau)$ for all~$k$.
Then $\psi=e^{-\Phi}=1-\tau$ and, in whitened coordinates,
\begin{equation}\label{eq:lw-whitened}
  A^* = e^{-\Phi}\,A + (1-e^{-\Phi})\,I
      = (1-\tau)\,A + \tau\,I.
\end{equation}
Transforming back to the original coordinates gives
\begin{equation}\label{eq:lw-original}
  \boxed{\;
    \Sigma^* = (1-\tau)\,\hat\Sigma + \tau\,\Sigma_{\mathrm{ref}}.
  \;}
\end{equation}
For the classical choice
$\Sigma_{\mathrm{ref}}=\mu\,I$~{\upshape[3]},
\eqref{eq:lw-original} has exactly the linear Ledoit--Wolf
shrinkage form; Ledoit and Wolf additionally provide a statistical
estimator of the optimal shrinkage intensity~$\tau$.
\end{proposition}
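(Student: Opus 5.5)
The plan is to reduce everything to the Coupling Identity (Theorem~\ref{thm:coupling-identity}) applied with a uniform potential, and then undo the whitening.

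\emph{Step 1: admissibility.} First I check that the uniform field $\Phi_k=\Phi=-\ln(1-\tau)$ with $\tau\in[0,1)$ is admissible. Here $\Phi\ge 0$, so $\psi=e^{-\Phi}=1-\tau\in(0,1]$. The regularised eigenvalue is
\[
\lambda_k^*=(1-\tau)\lambda_k+\tau .
\]
It is positive, being a convex combination of $\lambda_k>0$ and $1$ with weight $1-\tau>0$ on the first. Equivalently, $\Phi\ge0>\ln(1-\lambda_k)$ whenever $\lambda_k<1$, so the field lies in $\mathcal{A}(\lambda)$ by Corollary~\ref{cor:admissible-domain}.

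\emph{Step 2: the whitened covariance.} Corollary~\ref{cor:uniform} (equivalently Theorem~\ref{thm:coupling-identity}) gives $\lambda_k^*=e^{-\Phi}\lambda_k+(1-e^{-\Phi})$ for every $k$. The minimiser lies on the spectral submanifold $M_A$, so the regularised whitened covariance is $A^*=U\Lambda^*U^\top$ with $\Lambda^*=\mathrm{diag}(\lambda_k^*)$. Since $\Lambda^*=e^{-\Phi}\Lambda+(1-e^{-\Phi})I$ and $UU^\top=I$,
\[
A^*=e^{-\Phi}U\Lambda U^\top+(1-e^{-\Phi})I=(1-\tau)A+\tau I,
\]
which is \eqref{eq:lw-whitened}. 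The key point is that a uniform potential makes the spectral map affine with the same coefficients on every mode, so it commutes with the basis change $U$.

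\emph{Step 3: back to original coordinates.} The whitening is $z=\Sigma_{\mathrm{ref}}^{-1/2}x$, so a covariance $\Sigma$ in original coordinates corresponds to $\Sigma_{\mathrm{ref}}^{-1/2}\Sigma\Sigma_{\mathrm{ref}}^{-1/2}$ in whitened coordinates. Hence $\Sigma^*=\Sigma_{\mathrm{ref}}^{1/2}A^*\Sigma_{\mathrm{ref}}^{1/2}$. Substituting $A=\Sigma_{\mathrm{ref}}^{-1/2}\hat\Sigma\Sigma_{\mathrm{ref}}^{-1/2}$ and using linearity of conjugation,
\[
\Sigma^*=(1-\tau)\hat\Sigma+\tau\Sigma_{\mathrm{ref}},
\]
which is \eqref{eq:lw-original}. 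For $\Sigma_{\mathrm{ref}}=\mu I$ this is literally the linear Ledoit--Wolf target form, and only the choice of $\tau$ remains external to SM.

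\emph{Main obstacle.} The only delicate point is interpretive rather than computational: fixing the convention that the regularised covariance in original coordinates is $\Sigma_{\mathrm{ref}}^{1/2}A^*\Sigma_{\mathrm{ref}}^{1/2}$. This is the inverse of Definition~\ref{def:whitened-coords}, and I would state it explicitly. I would also remark that the return leg plays no role here, which is why this is the covariance-only limit.
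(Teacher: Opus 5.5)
Your proposal is correct and takes essentially the same route as the paper. Both arguments apply Corollary~\ref{cor:uniform} to get $\Lambda^*=(1-\tau)\Lambda+\tau I$, conjugate by $U$ to obtain $A^*=(1-\tau)A+\tau I$, and then reverse the whitening; your explicit admissibility check (immediate since $\Phi\ge 0$) and your explicit statement of the convention $\Sigma^*=\Sigma_{\mathrm{ref}}^{1/2}A^*\Sigma_{\mathrm{ref}}^{1/2}$ are small additions that the paper leaves implicit.
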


\begin{proof}
In whitened coordinates ($\Sigma_{\mathrm{ref},w}=I$),
Corollary~\ref{cor:uniform} gives
$\lambda_k^*=(1-\tau)\lambda_k+\tau$ for every~$k$, hence
$\Lambda^*=(1-\tau)\Lambda+\tau\,I$ and
$A^*=U\Lambda^*U^\top=(1-\tau)\,A+\tau\,I$.
Reversing the whitening
$A^*=\Sigma_{\mathrm{ref}}^{-1/2}\Sigma^*\Sigma_{\mathrm{ref}}^{-1/2}$
yields~\eqref{eq:lw-original}.
\end{proof}

\paragraph{What SM adds.}
SM extends linear Ledoit--Wolf in three directions:
\begin{enumerate}[nosep,leftmargin=1.5em]
\item \emph{Mode-dependent potentials.}
  The uniform $\Phi$ is replaced by a spectral field
  $(\Phi_1,\ldots,\Phi_n)$, so that each eigendirection of the
  whitened relative operator receives its own deformation factor.
\item \emph{Signed potentials.}
  $\Phi_k<0$ is admissible: individual modes may be amplified
  beyond their empirical deviation, not only attenuated.
\item \emph{Joint return--covariance deformation.}
  LW produces a covariance estimator.
  SM couples the same persistence factor~$\psi_k=e^{-\Phi_k}$
  to both covariance deviation and expected return
  (Theorem~\ref{thm:coupling-identity}).
  This joint coupling distinguishes SM from covariance-only
  shrinkage estimators such as Ledoit--Wolf.
\end{enumerate}

\subsubsection*{Nonlinear Ledoit--Wolf}

Nonlinear Ledoit--Wolf shrinkage~[4] replaces the uniform shrinkage
intensity~$\tau$ by eigenvalue-dependent corrections, producing
target eigenvalues~$d_k$ that depend nonlinearly on the sample
spectrum.  For modes with $\lambda_k=1$, SM produces
$\lambda_k^*=1$ for every finite~$\Phi_k$, so no non-trivial
target arises.  For $\lambda_k\neq 1$, if such a target is
representable within SM, the representing potential is
\begin{equation}\label{eq:nlw-phi}
  \Phi_k^{\mathrm{NLW}}
  = -\ln\!\left(\frac{d_k-1}{\lambda_k-1}\right),
\end{equation}
which requires
$\operatorname{sign}(d_k-1)=\operatorname{sign}(\lambda_k-1)$
and $d_k>0$.  Whether the nonlinear Ledoit--Wolf target eigenvalues
satisfy these conditions for all spectral configurations---in
particular near the reference crossing $\lambda_k=1$---is a
quantitative question left for a separate treatment.

\subsection{Agnostic Risk Parity}
\label{sec:arp}

Benichou et~al.~[8] propose \emph{Agnostic Risk
Parity} (ARP), a portfolio construction that replaces the Markowitz
operator~$C^{-1}$ by~$C^{-1/2}$.  The structural kinship with
Special Markowitz is instructive; so is the precise point at which
the two constructions diverge.

\paragraph{Shared spectral geometry.}
ARP begins by whitening the risk space via
$\hat{r} = C^{-1/2}r$, rendering the covariance in the
transformed space isotropic.  SM performs a closely related
operation, but \emph{relative to a reference state}:
$A = \Sigma_{\mathrm{ref}}^{-1/2}\,\hat\Sigma\,
    \Sigma_{\mathrm{ref}}^{-1/2}
= U\Lambda U^\top$.
Both frameworks exploit the spectral geometry of a whitened
covariance operator, although for different purposes.

\paragraph{Divergence.}
ARP asks: \emph{If we know nothing reliable about preferred
directions, which portfolio map respects the resulting rotational
symmetry?}  The answer, under an agnostic assumption on the
predictor covariance ($Q \propto I$), is
$w_k^{\mathrm{ARP}}
\propto p_k/\sqrt{\lambda_k}$,
where $p_k$ is the projection of the return indicator onto the
$k$-th eigenmode.  Markowitz's
$\lambda_k^{-1}$ is softened to~$\lambda_k^{-1/2}$, equalising
the expected risk contribution across principal components.

SM asks a different question: \emph{If we possess heterogeneous
statistical evidence across modes, how much of the empirical
deviation from the reference should survive in each direction?}
The answer is the spectral gate~\eqref{eq:wk}.

\paragraph{The ARP spectral filter as an SM-representable potential.}
On the \emph{portfolio level}, the SM framework can reproduce the
ARP spectral filter by a particular choice of~$\Phi$.  (This does
not make the complete ARP theory a special case of SM; see the
differences below.)  Setting
$p_k = \delta_k$ and requiring
$w_k^{\mathrm{SM}} \propto w_k^{\mathrm{ARP}}$, we need
$(\lambda_k + e^{\Phi_k} - 1)^{-1}
= c/\sqrt{\lambda_k}$
for some global constant~$c > 0$, hence
\begin{equation}\label{eq:Phi-ARP}
\Phi_k^{\mathrm{ARP}}
= \ln\!\Bigl(1 + \tfrac{\sqrt{\lambda_k}}{c}
  - \lambda_k\Bigr).
\end{equation}
For a real-valued potential, \eqref{eq:Phi-ARP} requires only that
its argument be positive, i.e.\ $\sqrt{\lambda_k}/c > \lambda_k-1$.
No sign restriction on~$\Phi_k$ is imposed a priori.  The condition
$\Phi_k^{\mathrm{ARP}}\ge 0$ (no modal amplification beyond
Markowitz) holds if and only if $c\le 1/\sqrt{\lambda_k}$;
a sufficient condition is $c \le 1/\sqrt{\lambda_{\max}}$.

More generally, any positive spectral filter $g(\lambda)$ satisfying
$w_k = g(\lambda_k)\,\delta_k$ can be expressed via
\begin{equation}\label{eq:Phi-general}
\Phi(\lambda) = \ln\!\Bigl(1 + \frac{1}{g(\lambda)} - \lambda\Bigr)
\end{equation}
whenever the argument is positive, i.e.\ $1/g(\lambda) > \lambda-1$
(admissibility).  Filters more conservative than Markowitz
($1/g(\lambda)\ge\lambda$) produce $\Phi\ge 0$; filters more
aggressive than Markowitz ($1/g(\lambda)<\lambda$) produce
$\Phi<0$.  The boundary $\Phi_k=0$ corresponds to Markowitz
itself: $g(\lambda)=\lambda^{-1}$.

This makes~$\Phi$ a
\textbf{common language for spectral portfolio transformations}:
Markowitz corresponds to $\Phi_k = 0$
($g = \lambda^{-1}$, the identity);
ARP to $\Phi_k = \Phi^{\mathrm{ARP}}(\lambda_k)$
($g = c/\!\sqrt{\lambda}$);
MP reliability to
$\Phi_k = \Phi^{\mathrm{MP}}(\lambda_k;\,n/T)$;
and aggressive filters ($g>\lambda^{-1}$ in some modes) to
$\Phi_k<0$ in those modes.

\paragraph{Three substantive differences.}
Despite this representability on the portfolio level, ARP and~SM
differ in structure:
\begin{enumerate}[nosep,leftmargin=1.5em]
\item \textbf{Symmetry vs.\ heterogeneous spectral potential.}\enspace
ARP imposes rotational invariance by treating all predictor
directions as statistically equivalent ($Q \propto I$).  SM
permits---and is designed for---mode-dependent spectral potentials
through the field~$(\Phi_k)$.

\item \textbf{No joint regularisation in ARP.}\enspace
ARP regularises the risk operator ($C^{-1} \to C^{-1/2}$) and
treats return indicators separately via their
covariance~$Q$.  SM performs a \emph{joint} scaling of
returns and covariance through a single persistence
factor~$\psi_k=e^{-\Phi_k}$ per mode
(Theorem~\ref{thm:coupling-identity}).  This joint coupling has
no counterpart in ARP.

\item \textbf{Variational origin.}\enspace
ARP is derived from a symmetry argument; no explicit optimisation
is involved.  SM minimises a spectral free energy and
\emph{characterises} the Stein loss as the unique covariance
divergence compatible with the Coupling Identity.
\end{enumerate}

\noindent
ARP can be recovered \emph{at the portfolio level} as a particular
spectral potential within the SM framework~\eqref{eq:Phi-ARP}.
The SM formalism does not subsume ARP's symmetry-based theory;
rather, it provides a variational \emph{representation} of the ARP
spectral filter within a unified free-energy framework.
Since ARP defines portfolio weights only up to global scaling,
the constant~$c$ is not determined by ARP alone: different
normalisations produce a one-parameter family of representing
SM potentials.  Fixing a portfolio normalisation fixes~$c$, and
hence the representing potential.

\section{Outlook}

SM is the decoupled, additive-pressure theory: variationally decoupled modes,
static potential field, analytic pressure functional. Three
generalisations follow from relaxing these conditions.

\medskip
\begin{center}
\begin{tabular}{@{}lll@{}}
\toprule
& \textbf{Structure} & \textbf{Consequence} \\
\midrule
SM & $F=\sum_k F_k$ & modes decouple,
  $\partial^2 P/\partial\Phi_j\partial\Phi_k=0$ \\[3pt]
GM & $F=\sum_k F_k+\sum_{j\neq k}J_{jk}V_{jk}+\cdots$
  & modes couple,
  $\partial^2 P/\partial\Phi_j\partial\Phi_k\neq 0$ \\[3pt]
Dynamic & $\Phi_k(t)$
  & time-dependent potential, flow equations \\
\bottomrule
\end{tabular}
\end{center}

\medskip\noindent
The SM/GM boundary is sharp: a future General Markowitz theory
would be characterised by the
presence of interaction terms $J_{jk}$ that break the separable
structure. These may arise when the eigendirections of~$A$ are not
stable --- non-Gaussian returns, time-varying correlations, or
non-commuting data and noise operators.

In a synergetic analogy: the eigendirections above the
Marchenko--Pastur threshold play the role of order parameters (reliable,
slow, persistent); the bulk directions play the role of enslaved modes
(unreliable, fast, noise). SM suppresses the enslaved modes toward
the reference and lets the order parameters determine the portfolio.

\begin{remark}[Phase transitions --- a programme]
\label{rem:phase}
When an eigenvalue crosses the MP threshold, an order parameter
appears or disappears and the dimension of the effective portfolio
manifold changes. By Proposition~\ref{prop:convexity}, the SM
pressure $P(\Phi)$ is itself real-analytic --- no intrinsic
singularity occurs. However, the spectral calibration
$\lambda\mapsto\Phi_k(\lambda)$ (Remark~\ref{rem:RMT}) can
introduce a non-smooth dependence on the data eigenvalues at the
BBP threshold, producing a calibration-induced non-analyticity
in $P(\Phi(\lambda))$ as a function of~$\lambda$
(Remark~\ref{rem:no-phase-transition}). Intrinsic pressure
singularities are excluded in finite-dimensional SM by the
analyticity of the modal pressures and their finite additive
composition (Proposition~\ref{prop:convexity}).
Whether the non-separable interactions of General Markowitz,
together with additional mechanisms such as loss of uniqueness
or a large-system limit, can produce genuine phase transitions
is an open question deferred to subsequent work.
\end{remark}

\medskip\noindent
Special Markowitz is the separable, additive-pressure theory.
General Markowitz is its interacting, non-additive extension.

\bigskip\noindent
\paragraph{Logical architecture.}
The theory proceeds in the following chain:
\begin{equation}\label{eq:chain}
  \underbrace{\psi_k\in(0,\infty)\vphantom{\Big|}}_
    {\text{persistence factor}}
  \;\xrightarrow{\;-\ln\;}
  \;\underbrace{\Phi_k\in\R\vphantom{\Big|}}_
    {\text{signed spectral potential}}
  \;\xrightarrow{\;e^{\Phi}\!-1\;}
  \;\underbrace{s_k\vphantom{\Big|}}_
    {\text{coupling strength}}
  \;\longrightarrow\;
  \underbrace{(m_k^*,\,\lambda_k^*\!-\!1)
    =\psi_k(\delta_k,\,\lambda_k\!-\!1)
    \vphantom{\Big|}}_
    {\text{joint regularisation}}
  \;\longrightarrow\;
  \underbrace{w_k^{SM}\vphantom{\Big|}}_
    {\text{portfolio}}.
\end{equation}
The variational formalism provides the mathematical structure ---
potential, Gibbs weight, free energy, pressure, conjugacy, and
separability. An exact representation of the SM pressure within
Ruelle's thermodynamic formalism remains an open mathematical
question.

\section*{References}

\begin{enumerate}[label={[\arabic*]},nosep,leftmargin=2em]
\item J.\ Baik, G.\ Ben Arous, S.\ P\'ech\'e,
``Phase transition of the largest eigenvalue,''
\textit{Ann.\ Probab.}\ \textbf{33}, 1643 (2005).
\item F.\ Black, R.\ Litterman,
``Global portfolio optimization,''
\textit{FAJ}\ \textbf{48}(5), 28 (1992).
\item O.\ Ledoit, M.\ Wolf,
``A well-conditioned estimator,''
\textit{JMVA}\ \textbf{88}, 365 (2004).
\item O.\ Ledoit, M.\ Wolf,
``Analytical nonlinear shrinkage,''
\textit{Ann.\ Statist.}\ \textbf{48}, 3043 (2020).
\item D.\ Ruelle,
\textit{Thermodynamic Formalism}, 2nd ed.,
Cambridge University Press (2004).
\item R.\ Bowen,
\textit{Equilibrium States and the Ergodic Theory of
Anosov Diffeomorphisms}, Lecture Notes in Mathematics~470,
Springer (1975).
\label{Bowen1975}
\item G.\ Bianconi,
``Gravity from entropy,''
\textit{Phys.\ Rev.\ D}\ \textbf{111}, 066001 (2025).
\label{Bianconi2025}
\item R.\ Benichou, Y.\ Lemperiere, E.\ Serie, J.\ Kockelkoren,
P.\ Seager, J.-P.\ Bouchaud, and M.\ Potters,
``Agnostic Risk Parity: Taming Known and Unknown-Unknowns,''
\textit{J.\ Investment Strategies}\ \textbf{6}(3), 2017;
arXiv:1610.08818.
\label{Benichou2016}
\item F.\ Benaych-Georges, R.\ R.\ Nadakuditi,
``The eigenvalues and eigenvectors of finite, low rank
perturbations of large random matrices,''
\textit{Adv.\ Math.}\ \textbf{227}, 494 (2011).
\label{BenGeo2011}
\item J.-P.\ Bouchaud, M.\ Potters,
\textit{Theory of Financial Risk and Derivative Pricing},
2nd ed., Cambridge University Press (2003).
\label{BouchaudPotters2003}
\end{enumerate}

\end{document}